\newcommand{\paran}[1]{\left(#1\right)}
\newcommand{\set}[1]{\left\{#1\right\}}
\newcommand{\E}[1]{\mathbb{E}\left[#1\right]}
\newcommand{\pr}[1]{\text{Pr}\left[#1\right]}
\newcommand{\cost}[1]{\text{cost}\left( #1 \right)}
\DeclareMathOperator*{\argmin}{argmin}
\def\S{\mathcal{S}}
\def\D{\mathcal{D}}
\def\R{\mathcal{R}}
\def\O{\mathcal{O}}
\def\ada{\texttt{Adaptive}}
\def\tpi{\tilde \pi}
\def\S{\mathcal{S}}
\def\D{\mathcal{D}}
\def\O{\mathcal{O}}
\def\R{\mathcal{R}}
\def\A{\mathcal{A}}
\def\tpi{\tilde \pi}
\def\ps{p^*}
\begin{document}
%\title{\ahmad{Toward efficient monitoring of massive dynamic data}}

\title{
%\ahmad{Schedule Optimization Problem} \\
Optimizing Static and Adaptive Probing Schedules for Rapid Event Detection}

\author{Ahmad Mahmoody \and Evgenios M. Kornaropoulos \and Eli Upfal}
\institute{Department of Computer Science, Brown University \\\email{\{ahmad, evgenios, eli\}@cs.brown.edu} }

\maketitle

\begin{abstract}
We formulate and study a fundamental search and detection problem, \emph{Schedule Optimization}, motivated by a variety of real-world applications, ranging from monitoring content changes on the web, social networks, and user activities to detecting failure on large systems with many individual machines.

We consider a large system consists of many nodes, where each node has its own rate of generating new events, or items. A monitoring application can probe a small number of nodes at each step, and our goal is to compute a probing schedule that minimizes the expected number of undiscovered items at the system, or equivalently, minimizes  the expected time to discover a new item in the system.

We study the Schedule Optimization problem both for deterministic and randomized memoryless algorithms. We provide lower bounds on the cost of an optimal schedule and construct close to optimal schedules with rigorous mathematical guarantees. Finally, we present an adaptive algorithm that starts with no prior information on the system and converges to the optimal memoryless algorithms by adapting to observed data.
\end{abstract}

\section{Introduction}

We introduce and study a fundamental stochastic search and detection problem, \emph{Schedule Optimization}, that captures a variety
of practical applications, ranging from monitoring content changes on the web, social networks, and user activities to detecting failure on large systems with many individual machines. 

Our optimization problem consists  of a large set of units, or \emph{nodes}, that generate events, or \emph{items}, according to a random process with known or unknown parameters. A detection algorithm can discover new items in the system by probing a small number of nodes in each step. This setting defines a discrete, infinite time process, and the goal of the stochastic optimization problem is to construct a probing schedule that minimizes the long term expected number of undiscovered items in the system, or equivalently, minimizes  the expected time to discover a new item in the system.

We outline several important applications of this schedule optimization problem:

%In today's world, data is ubiquitous in every aspects of our lives. Data are generated in high volume and velocity\footnote{... and many other V's\cite{}.}. They usually get generated constantly by many different resources\cite{}. For instance, online users browse the internet and search for different queries, and all these activities are generated by millions of users constantly and result in a huge amount of data that constantly get generated.
%
%However, these data might be considered as \emph{raw} data that need to be processed further for analytical tasks to acquire more information. This happens in a wide range of applications, and in following we point out just to some of these applications.
%
\paragraph{News and Feed Aggregators.} To provide up to date summary of the news, news aggregator  sites need to constantly browse the Web, and often also the blogosphere and social networks, for new items. Scanning a site for new items requires significant communication and computation resources, thus the news aggregator can scan only a few sites simultaneously. The frequency of visiting a site has to depend on the likelihood of finding new items in that site. \cite{onlineRef-horincar2014online,fast-sia2007efficient,adaptive-bright2006adaptive}

\paragraph{Algorithmic Trading on Data.}
An emerging trend in algorithmic stock trading is the use of automatic search through the Web, the blogosphere, and social networks for relevant information that can be used in fast trading, before it appears in the more popular news sites~\cite{Delaney2009,ALPHA2014,AlphaFlash,mitra2011handbook,latar2015robot,wallstreet2015,McKinney2011}. The critical issue in this application is the speed of discovering new events, but again there is a resource limit on the number of sites that the search algorithm can scan simultaneously. 

%\paragraph{Search Data.}
%Online users constantly search and browse the Internet. These \emph{raw} history data are gathered (probably by third channel parties \cite{}) to analyze and predict the behavior and needs of the users, in order to improve their services, whether by showing more relevant ads or search results. 
%
%\paragraph{Algorithmic Stock Trading.}
%An emerging trend in algorithmic stock trading is the use of automatic search through the Web, the blogosphere, and social networks for relevant information that can be used in fast trading, before it appears in the more popular news sites~\cite{Delaney2009}. 

\paragraph{Detecting Anomaly and Machine Malfunction.}
In large server farm or any other large collection of semi-autonomous machines a central controller needs to identify and contain anomalies and malefactions as soon as possible, before they spread in the system. To minimize interference with the system's operation the controller must probe only a small number of machines in each step. 

\subsection{Our Contribution}

We consider an infinite, discrete time process in which $n$ nodes generate new items according to a stochastic process which is governed by a generating  vector $\pi$ (see Section~\ref{sec:model} for details). An algorithm can probe up to $c$ nodes per step to discover all new items in these nodes. The goal is to minimize the \emph{cost} of the algorithm (or the probing schedule), which we define as the long term (steady state) expected number of undiscovered items in the system.

We first show that the obvious approach of probing at each step the nodes with maximum expected number of undiscovered items at that step is not optimal. In fact, the cost of such a schedule can be arbitrary far from the optimal. 

Our first result toward the study of efficient schedules is a lower bound on the cost of any deterministic or random schedule as a function of the generating vector $\pi$.

Next we assume that the generating vector $\pi$ is known and study explicit constructions of deterministic and random schedules. We construct a deterministic schedule whose cost is within a factor of  $\paran{3+(c-1)/c}$ of the optimal cost, and a very simple, memoryless random schedule with cost that is within a factor of $\paran{2+(c-1)/c}$ from optimal, where $c$ is the maximum number of probes at each step.

Finally, we address the more realistic scenario in which the generating vector, $\pi$, is not known to the algorithm and may change in time. We construct an adaptive scheduling algorithm that learns from probing the nodes and converges to the optimal memoryless random schedule.

\section{Related Work}\label{sec:related_work}
%\ahmad{In this section we present some other work/problems that have some similar flavor to our problem, \emph{Schedule Optimization}. }

The \emph{News and Feed Aggregation} problem is a very well-studied topic, in which the general goal is to obtain the updates of news websites (e.g. by RSS feeds). Among many introduced objectives~\cite{survey-oita2011deriving,onlineRef-horincar2014online,adaptive-bright2006adaptive} in studying this problem, the most similar one to our cost function is the \emph{delay} function presented by~\cite{fast-sia2007efficient}.  In~\cite{fast-sia2007efficient} it is assumed that the rates of the news publication does not change, where in our setting these rates may change and our algorithm ({\ada}) can adapt itself to the new setting. Also, we assume at any given time the number of probes is fixed (or bounded) regarding the limited computational power for simultaneous probes, but \cite{fast-sia2007efficient} uses a relaxed  assumption by fixing the number of probes over a \emph{time window} of a fixed length which may result in high number of probes at a single time step. Finally, \cite{fast-sia2007efficient} introduces a deterministic algorithm in which the number of probes to each feed is obtained by applying the Lagrange multipliers method (very similar result to Theorem~\ref{thm:randomized_schedule}), but they loose the guarantee on optimality of their solution, by rounding the estimated number of probes to integers. In contrast, our solution provides  theoretical guarantee on optimality of our output schedule.

\emph{Web-crawling} is another related topic, where a web-crawler aims to obtain the most recent snapshots of the web. However, it differs from our model substantially: in web-crawling algorithm data get \emph{updated}, so missing some intermediate snapshot would not affect the quality of the algorithm, where in our model data are generated and they all need to be processed~\cite{dasgupta2007discoverability,wolf2002optimal}.

There has been an extensive work on  \emph{Outbreak Detection} (motivated in part by the ``Battle of Water Sensors Network" challenge~\cite{BWSN2008}) using statistic or mobile sensor in physical domains, and regarding a variety of objectives~\cite{Leskovec2007,Krause2008,Hart2010}. Our model deviates from the Outbreak Detection problem as it is geared to detection in virtual networks such as the Web or social networks embedded in the Internet, where a monitor can reach (almost) any node at about the same cost. 

Another related problem is the \emph{Emerging Topic Detection} problem, where the goal is to identify emergent topics in a social network, assuming full access to the stream of all postings. Besides having different objectives, our model differs mainly in this accessibility assumption: the social network providers have an immediate access to all tweets or postings as they are submitted to their servers, whereas in our model we consider an outside observer who  needs an efficient mechanism to monitor changes, without having such full access privilege~\cite{Cataldi2010,Mathioudakis2010}.

In the next section, we formally define our model and the Schedule Optimization problem.

\section{Model and Problem Definition}\label{sec:model}
We study an infinite, discrete time process in which a set of $n$ \emph{nodes}, indexed by $1,\ldots,n$, generate new \emph{item}s according to a random generating process. The generating process at a given time step is characterized by a \emph{generating vector} $\pi=(\pi_1,\ldots,\pi_n)$, where 
$\pi_i$ is the expected number of new items generated at node $i$ at that step (by either a Bernoulli or  a Poisson process). The generation processes in different nodes are independent.

%Our results extend to a more general generating process in which at each step node $i$ generates a number of new items according to Poisson distribution with expectation $\pi_i$.
 We focus first on a \emph{static generating process} in which the generating vector does not change in time.  We then extend our results to adapt to  generating vectors that change in time.

%The system we are studying consisted of $n$ \emph{node} which generate \emph{item} according to a Our goal is to discover new \emph{items}
%In this section, we introduce our model and formally define our problem. Throughout this work, we call each distributed resource (that generates the data) by a \emph{node}, and assume that there are $n$ nodes indexed by $1,\ldots,n$. We also call the units of data generated at nodes as \emph{item}. As mentioned above, our model occurs in an infinite time process, where at discrete time steps new items get generated at nodes, and we can probe up to $c$ nodes at each step. A \emph{generating vector} $\pi=(\pi_1,\ldots,\pi_n)$ is a vector where $\pi_i$ is the probability that node $i$ generates a new item at each time step. When a node is probed, we \emph{catch} all the items generated so far in that node. When an item is not caught yet, we may say it is \emph{undiscovered} or \emph{undetected}. Now, let's define the schedules:

Our goal is to detect new events as fast as possible by probing in each step a small number of nodes.
In particular, we consider probing schedules that can probe up to $c$ nodes per step.

\begin{definition}[Schedule] A \emph{$c$-schedule} is a function $\S:\mathbb{N} \rightarrow \{1,\dots,n\}^c$ specifying a set of $c$ nodes to be probed at any time $t\in \mathbb{N}$.
A deterministic function $\S$ defines a \emph{deterministic} schedule, otherwise the schedule is \emph{random}.
%	Suppose $\mathcal{X}$ is the set of all possible \ahmad{outcomes of a source randomness}, and let $P(n,c)$ be the set of all subsets of $[n]$ with at most $c$ elements. A \emph{$c$-schedule} is a function $\S: \mathcal{X}\times \mathbb{N} \rightarrow P(n,c)$, where $\S(x,t)$ assigns a subset of at most $c$ nodes to be probed at time $t$ (with access to the random coin $x$). A schedule $\S(x,t)$ is called (i) \emph{deterministic} if is independent of $x$, and (ii) \emph{memoryless} if it is independent of $t$ and there is a distribution $p=(p_1,\ldots,p_n)$ such that $\S(x,t)$ is a random draw from $p$; in this case, for notational convenience we denote $\S = p$. A schedule is \emph{cyclic} if it is deterministic and gets periodic after a time $t_0$, and it is $\ell$-cyclic if the length of its period is $\ell$.
\end{definition}

\begin{definition}[Memoryless Schedule]
A random schedule is \emph{memoryless} if it is defined by a vector $p=(p_1,\dots,p_n)$ such that at any step the schedule probes a set $C$ of $c$ items with
probability $\prod_{j\in C} p_i$ independent of any other event. In that case we use the notation $\S=p$.
\end{definition}

\begin{definition}[Cyclic Schedule]
A schedule, $\S$, is \emph{$\ell$-cyclic} if there is a finite time $t_0$ such that from time $t_0$ on, the schedule repeats itself every period of $\ell$ steps. 
A schedule is cyclic if it is $\ell$-cyclic for some positive integer $\ell$.
\end{definition}

%The monitoring algorithm can probe $c$ nodes at each step and our goal is to construct a probing schedule that discovers all new items in the network as fast as possible. We consider both deterministic and randomized (memoryless) schedules.

%\begin{definition}
%A \emph{deterministic $c$-schedule} is a function  
%$\D: \mathbb{N}\rightarrow \set{1,\ldots,n}^c$,
%assigning $c$ nodes to be probed at each time step $t\in\mathbb{N}$. \ahmad{The schedule is \emph{$\ell$-cyclic} if it repeats itself every $\ell$ steps, and is \emph{cyclic} if it is $\ell$-cycle for some positive integer $\ell$.}
%\end{definition}
%
%\begin{definition}
%A \emph{randomized, memoryless $c$-schedule}, or simply a randomized $c$-schedule, is a probability distribution $\mathbf{p} = (p_1,\ldots,p_n)$ over $\set{1,\ldots,n}$,
%such that at each step the schedule probes $c$ nodes chosen independently according to the distribution $\mathbf{p}$.
%\end{definition}

%Note that storing and/or computing  an optimal, or close to optimal, deterministic schedule can be very expensive. In fact, as in Theorem~\ref{thm:cyclic}, every optimal deterministic schedule is cyclic. However, the the period of the algorithm can be very long, and thus, the algorithm can be very expensive to store or execute. Instead, we show that simple to implement randomized, memoryless schedules provide an efficient alternative to the optimal deterministic schedule.

The quality of a probing schedule is measured by the speed in which it discovers new items in the system.
When a schedule probes a node $i$ at a time $t$, all items that were generated at that node by time $t-1$ are discovered (thus, each item is not discovered in at least one step). We define the \emph{cost} of a probing schedule as the long term expected number of undiscovered items in the system.

\begin{definition}[Cost]
The \emph{cost} of schedule $\S$ in a system of $n$ nodes with generating vector $\pi$ is
 $$\cost{\S,\pi} = \lim_{t\rightarrow \infty}  \frac{1}{t}\sum_{t'=1}^t \E{Q^{\S}(t')} 
 = \lim_{t\rightarrow \infty}  \frac{1}{t}\sum_{t'=1}^t \sum_{i=1}^n \E{Q_i^{\S}(t')},
 $$
where $Q_i^{\S}(t')$ is the number of undiscovered items at node $i$ and at time $t'$, and $Q^\S(t') = \sum_{i=1}^n Q_i^\S(t')$. 
The expectation is taken over the distribution of the generating system and the probing schedule. 
\end{definition}
While the cost can be unbounded for some schedules, the cost of the optimal schedule is always bounded. To see that, consider a round-robin schedule, $\S$, that probes each node every $n$ steps. Clearly no item is undiscovered in this schedule for more than $n$ steps, and the expected number of items generated in an interval of $n$ steps is $n\sum_{i=1}^n \pi_i$. Thus, $Q^\S(t) \leq n\sum_{i=1}^n \pi_i$, which implies $\cost{\S,\pi} \leq n\sum_{i=1}^n \pi_i$. Therefore, without loss of generality we can restrict our discussion to  bounded cost schedules. Also, note  that when the sequence $\set{\E{Q^\S(t)}}_{t\in\mathbb{N}}$ converges we have  $\cost{\S,\pi} = \lim\limits_{t\rightarrow\infty} \E{Q^\S(t)}$ (Cesaro Means~\cite{hardy1991divergent}).
%\end{lemma}
%\begin{proof}
%	For a convergent sequence $\set{a_t}_{t\in\mathbb{N}}$ we have $\lim\limits_{t\rightarrow\infty}\frac{1}{t}\sum_{t'=1}^t a_{t'} = \lim\limits_{t\rightarrow\infty}a_t$. Now by letting $a_t = \E{Q^{\S}(t)}$, the proof is complete. \qed
%\end{proof}

One can equivalently define the cost of a schedule in terms of the expected time that an item is in the system until it is discovered.

%$\sum_{i=1}^n \pi_i \omega^\S_i$, where $\omega^S_i$ is the average waiting time of an item generated at node $i$ until $i$ is probed.
\begin{lemma}~\label{lem:equivalent_cost}
Let $\omega^S_i$ be the expected waiting time of an item generated at node $i$ until node $i$ is probed by schedule $\S$. Then
$$\cost{\S,\pi} = \sum_{i=1}^n \pi_i \omega^\S_i.$$
\end{lemma}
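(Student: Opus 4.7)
The plan is to prove the equality node-by-node, showing that for each $i$,
\[
\lim_{t\to\infty}\frac{1}{t}\sum_{t'=1}^t \E{Q_i^{\S}(t')} \;=\; \pi_i\,\omega_i^{\S},
\]
and then summing over $i$. This is essentially Little's Law applied to the queue of undiscovered items at each node, and the exchange-of-summation trick lets us avoid any heavy queueing machinery.

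The key observation is a double-counting identity: $\sum_{t'=1}^{t} Q_i^{\S}(t')$ counts each item generated at node $i$ once for every time step during which it is undiscovered, i.e., exactly its individual waiting time. Let $N_i(t)$ denote the (random) number of items generated at node $i$ by time $t$, and let $W_{i,1},W_{i,2},\dots$ denote the consecutive waiting times of these items. Then, up to an error coming from items still undiscovered at time $t$ (which will be negligible after dividing by $t$),
\[
\sum_{t'=1}^{t} Q_i^{\S}(t') \;=\; \sum_{j=1}^{N_i(t)} W_{i,j} \;\pm\; O\bigl(\max\text{ waiting time in }[1,t]\bigr).
\]
I would then take expectations. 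Since the generation process is independent of the schedule, $\E{N_i(t)} = t\pi_i$, and the individual waiting times have expectation $\omega_i^{\S}$ by definition. A Wald-type argument (or conditioning first on the schedule, which makes the inter-probe intervals at node $i$ deterministic so that the number of items generated in each gap has mean $\pi_i$ times the gap length) yields
\[
\E{\sum_{j=1}^{N_i(t)} W_{i,j}} \;=\; t\,\pi_i\,\omega_i^{\S} + o(t).
\]
Dividing by $t$, taking $t\to\infty$, and summing over $i$ produces the claimed identity.

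The main obstacle is the boundary term: items generated near time $t$ that have not yet been probed. I would handle this by noting that in any bounded-cost schedule the expected queue length at any single time is finite, so the ``leftover'' contribution from items undiscovered at time $t$ is $O(1)$ and vanishes after dividing by $t$. A secondary subtlety is that for a random schedule the waiting time $\omega_i^{\S}$ itself is an expectation over the schedule's randomness; here one either conditions on the realized schedule and applies the deterministic argument before taking outer expectation, or invokes the fact that over a long horizon the empirical mean inter-probe gap at node $i$ converges (for the memoryless and cyclic schedules of interest) to its stationary expectation, so Cesaro averaging gives the right limit.
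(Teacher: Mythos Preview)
Your proposal is correct and follows the same approach as the paper: decompose the cost node-by-node and identify each term $\lim_{t\to\infty}\frac{1}{t}\sum_{t'=1}^t\E{Q_i^{\S}(t')}$ with $\pi_i\omega_i^{\S}$ via Little's Law. The only difference is that the paper simply cites Little's Law for that step, whereas you unpack it with the double-counting/Wald argument; your inline argument is sound (the boundary term is indeed $O(1)$ for bounded-cost schedules), just more detailed than what the paper gives.
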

\begin{proof}
Following the definition of the cost function we have
\begin{eqnarray*}\label{eq:cost}
	\cost{\S,\pi} %=  \lim_{t\rightarrow \infty} \E{R^{\S}(t)}
	=\lim_{t\rightarrow\infty} \frac{1}{t} \sum_{t'=1}^{t}\sum_{i=1}^n \E{Q_i^\S(t')} 
	=\sum_{i=1}^n \left[ \lim_{t\rightarrow\infty} \frac{\sum_{t'=1}^{t}\E{Q_i^\S(t')}}{t}\right] \nonumber 
	= \sum_{i=1}^n \pi_i \omega^\S_i,
\end{eqnarray*}
%Suppose $Q^\S_i(t)$ is the number of undiscovered items at node $i$ and at time $t$. Obviousely $Q^{\S}(t) = \sum_{i=1}^n Q^\S_i(t)$ and we have
%\begin{eqnarray*}\label{eq:cost}
%	\cost{\S} %=  \lim_{t\rightarrow \infty} \E{R^{\S}(t)}
%	&=&
%	\lim_{t\rightarrow\infty} \frac{1}{t} \sum_{t'=1}^{t} \E{Q^\S(t')} \nonumber 
%	=\lim_{t\rightarrow\infty} \frac{1}{t} \sum_{t'=1}^{t}\sum_{i=1}^n \E{Q_i^\S(t')} \\
%	&= &\sum_{i=1}^n \left[ \lim_{t\rightarrow\infty} \frac{\sum_{t'=1}^{t}\E{Q_i^\S(t')}}{t}\right] \nonumber 
%	= \sum_{i=1}^n \pi_i \omega^\S_i,
%\end{eqnarray*}\qed
where the last eqaulity is obtained by applying Little's Law~\cite{Leon-Garcia2008}.  \qed
\end{proof}

\begin{corollary}
A schedule that minimizes the expected number of undiscovered items in the system simultaneously minimizes the expected time that an item is undiscovered.
\end{corollary}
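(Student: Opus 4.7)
The plan is to deduce the Corollary directly from Lemma~\ref{lem:equivalent_cost} by observing that the two quantities differ only by a factor that does not depend on the schedule. First I would use Lemma~\ref{lem:equivalent_cost} to rewrite $\cost{\S,\pi} = \sum_{i=1}^n \pi_i \omega^\S_i$.

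Next I would pin down what ``expected time that an item is undiscovered'' means: if we pick a generated item at random (weighted by the per-node generation rate, since node $i$ contributes items at rate $\pi_i$), the expected undiscovered waiting time is
\[
\overline{\omega}^\S \;=\; \frac{\sum_{i=1}^n \pi_i \omega^\S_i}{\sum_{i=1}^n \pi_i}.
\]
Plugging in the expression from Lemma~\ref{lem:equivalent_cost} gives the identity
\[
\cost{\S,\pi} \;=\; \left(\sum_{i=1}^n \pi_i\right) \overline{\omega}^\S.
\]
Since $\sum_{i=1}^n \pi_i$ is a property of the generating system and is independent of the schedule $\S$, minimizing the left-hand side over $\S$ is equivalent to minimizing $\overline{\omega}^\S$, which gives the Corollary.

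There is no real obstacle here; the only subtle point is making the ``expected time an item is undiscovered'' precise as a rate-weighted average over items, and once that is done the conclusion is immediate from Lemma~\ref{lem:equivalent_cost}. If one preferred instead to interpret the statement node-by-node, the same proof shows that for each fixed rate vector $\pi$ the two objectives $\cost{\S,\pi}$ and $\overline{\omega}^\S$ are positive scalar multiples of one another, so their argmins over $\S$ coincide.
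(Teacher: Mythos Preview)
Your proposal is correct and is exactly the intended argument: the paper states this Corollary without proof, as an immediate consequence of Lemma~\ref{lem:equivalent_cost}, and your derivation---that $\cost{\S,\pi}$ and the rate-weighted average waiting time $\overline{\omega}^\S$ differ only by the schedule-independent factor $\sum_i \pi_i$---is precisely the reasoning the paper is relying on. There is nothing to add.
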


\begin{corollary}\label{cor:sum_lower}
	For any schedule $\S$, $\cost{\S, \pi} \geq \sum_{i=1}^n \pi_i$.
\end{corollary}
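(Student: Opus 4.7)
The plan is to combine Lemma~\ref{lem:equivalent_cost} with a simple observation about the minimum possible waiting time. By the lemma, for any schedule $\S$ we have $\cost{\S,\pi} = \sum_{i=1}^n \pi_i \omega^\S_i$, so it suffices to argue that $\omega^\S_i \geq 1$ for every node $i$.

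The observation I would use is the one already pointed out in the definition of cost: ``When a schedule probes a node $i$ at a time $t$, all items that were generated at that node by time $t-1$ are discovered.'' Consequently, any item generated at node $i$ at time $t'$ can be discovered no earlier than step $t'+1$, since the probe occurring at step $t'$ only sweeps items generated at times $\leq t'-1$. The waiting time of that item is therefore at least $1$, and taking expectations over the arrival time and the schedule yields $\omega^\S_i \geq 1$.

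Plugging this bound into the identity from Lemma~\ref{lem:equivalent_cost} gives
\[
\cost{\S,\pi} \;=\; \sum_{i=1}^n \pi_i \omega^\S_i \;\geq\; \sum_{i=1}^n \pi_i,
\]
which is exactly the claim. There is no real obstacle here; the corollary is almost immediate once Lemma~\ref{lem:equivalent_cost} is in hand, and the only thing to be careful about is citing the ``off-by-one'' convention built into the model (an item generated at step $t'$ is not yet observable at the probe of step $t'$) so as to justify $\omega^\S_i \geq 1$ uniformly across all nodes and schedules, deterministic or random.
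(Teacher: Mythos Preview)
Your proof is correct and is essentially identical to the paper's own argument: invoke Lemma~\ref{lem:equivalent_cost} and use the model's convention that a probe at time $t$ only discovers items generated by time $t-1$ to conclude $\omega_i^{\S}\geq 1$. Nothing further is needed.
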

\begin{proof}
	As mentioned above, when we probe a node $i$ at time $t$ we discover only the items that have been generated by time $t-1$. Therefore, $\omega_i^\S \geq 1$, and by Lemma~\ref{lem:equivalent_cost} the proof is complete. \qed
\end{proof}

%Thus, we focus on the follows:
Now, our main problem is defined as the following:
\begin{definition}[Schedule Optimization] Given a generating vector $\pi$ and a positive integer $c$, find a  $c$-schedule with minimum cost. 
\end{definition}
%As we later show (Theorem~\ref{thm:cyclic}) there always exists a cyclic optimal schedule. However, storing and/or computing  an optimal, or close to optimal, deterministic schedule can be very expensive.
%

When the generating vector is not known a priori to the algorithm the goal is to design a schedule that \emph{converges} to an optimal one. For that we need the following definition:
\begin{definition}[Convergence]
	We say schedule $\S$ \emph{converges} to schedule $\S'$, if for any generating vector $\pi$,
	$\lim\limits_{t\rightarrow\infty} \left|\E{Q^{\S}(t)} - \E{Q^{\S'}(t)}\right| = 0$.
%	if there is a function $m:(0,1)^2 \rightarrow \mathbb{N}$ such that if $t \geq m(\epsilon,\delta)$ with probability at least $1-\delta$ we have $|Q^{\S}(t) - Q^{\S'}(t)| < \epsilon \cdot Q^{\S'}(t)$.
\end{definition}

\section{Results}\label{sec:isolated_nodes}
We start this section by, first, showing that the obvious approach of maximizing the expected number of detections at each step is far from optimal. We then prove a lower bound on the cost of any schedule, and provide  deterministic and memoryless $c$-schedules that are within a factor of $(3+(c-1)/c)$ and $(2+(c-1)/c)$, respectively, from the optimal. Finally, we introduce an algorithm, {\ada} , which outputs a schedule $\A$ that converges to the optimal \emph{memoryless} 1-schedule when the generating vector $\pi$ is not known in advance. We also show that {\ada} can be used to obtain a $c$-schedule $\A^c$  whose cost is within $(2+(c-1)/c)$ factor of any optimal $c$-schedule.

Throughout this section, by $\tau^\S_i(t)$ we mean the number of steps from the last time that node $i$ was probed until time $t$, while executing schedule $\S$; if $i$ has not been probed so far, we let $\tau_i^\S(t) = t$. Using the definition, it is easy to see that
\begin{equation}\label{eq:tau_q}
\E{Q^\S_i(t)} = \pi_i \E{\tau_i^\S(t)},
\end{equation}
when the expectations are over the randomness of \emph{both} $\S$ and $\pi$. Therefore, if the expectation is over \emph{only} the randomness of $\pi$ we have 
\begin{equation}\label{eq:tau_q_pi_rand}
\E{Q^\S_i(t)} = \pi_i \tau_i^\S(t).
\end{equation}

%%%%%%%%%%%%%%%%%%%%%%%%%%%%%%%%%%%%%%%%%%%%%
% Immediate Gain
%%%%%%%%%%%%%%%%%%%%%%%%%%%%%%%%%%%%%%%%%%%%%

\subsection{On Maximizing Immediate Gain} 
Let $\S$ be a 1-schedule that at each step, probes the node with the maximum expected number of undetected items. By \eqref{eq:tau_q_pi_rand}, the expected number of undetected items at node $i$ and at time $t$ is $\pi_i\tau_i^\S(t)$, and thus, $\S(t)=\arg\max_i \pi_i \tau^\S_i(t)$.

Now, suppose $\pi_i= 2^{-i}$, for $1\leq i \leq n$. Since the probability that node 1 has an undetected item in each step is at least $1/2$, node $i$ is probed no more than once in each $2^{i-1}$ steps. Thus, the expected number of time steps that an item at node $i$ will stay undetected is at least
$\frac{1}{2^{i-1}}(1+\ldots+2^{i-1}) = \frac{2^{i-1}+1}{2} > 2^{i-2}$.
Using Lemma~\ref{lem:equivalent_cost}, the cost of this schedule is at least 
$\sum_{i=1}^n \pi_i \omega_i > \sum_{i=1}^n  2^{-i}2^{i-2} =\Omega(n)$. Now, consider an alternative schedule that probes node $i$ in each step with probability $2^{-i/2}/Z$, where $Z=\sum_{j=1}^n 2^{-j/2}$. The expected number of steps between two probes of $i$ is $Z/2^{-i/2}$, and the cost of this schedule is 
$$\sum_{i=1}^n 2^{-i} \paran{\frac{2^{-i/2}}{\sum_{j=1}^n 2^{-j/2}}}^{-1}= \paran{\sum_{j=1}^n 2^{-j/2}}^2 = O(1).$$
Thus, optimizing immediate gain is not optimal in this problem.

%%%%%%%%%%%%%%%%%%%%%%%%%%%%%%%%%%%%%%%%%%%%%
% Optimal - deterministic
%%%%%%%%%%%%%%%%%%%%%%%%%%%%%%%%%%%%%%%%%%%%%
\subsection{Lower Bound on Optimal Cost}\label{sec:optimal}
In this section we provide a lower bound on the optimal cost, i.e., the cost of an optimal schedule.

\begin{theorem}\label{thm:lower_bound_optimal}
For any $c$-schedule $\O$ with finite cost we have
$$\cost{\O,\pi} \geq \max\set{\sum_{i=1}^n \pi_i, \frac{1}{2c} \paran{\sum_{i=1}^n \sqrt{\pi_i}}^2}.$$ 
\end{theorem}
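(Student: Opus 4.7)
The first half of the maximum, $\cost{\O,\pi}\ge \sum_i \pi_i$, is already Corollary~\ref{cor:sum_lower}, so the real work is to establish the bound $\cost{\O,\pi}\ge \tfrac{1}{2c}\bigl(\sum_i\sqrt{\pi_i}\bigr)^2$. The plan is to reduce this to Cauchy--Schwarz applied twice: once to relate the average cost of node $i$ to its long-run probing frequency, and once across nodes subject to the budget constraint $\sum_i f_i \le c$.

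First, I would set up the frequencies. For a $c$-schedule $\O$ (possibly randomized), let $N_i(T)$ be the random number of times node $i$ is probed in the first $T$ steps, and write $f_i=\liminf_{T\to\infty} \E{N_i(T)}/T$. Since at most $c$ probes occur per step we have $\sum_i N_i(T)\le cT$ deterministically, so $\sum_i f_i\le c$. This is the "budget'' that the Cauchy--Schwarz at the end will exploit.

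Next, I would bound $\omega_i^{\O}$ from below by $1/(2f_i)$. Using $\cost{\O,\pi}=\sum_i \pi_i \omega_i^\O$ (Lemma~\ref{lem:equivalent_cost}) together with Eq.~(\ref{eq:tau_q_pi_rand}), it suffices to look at $\tfrac{1}{T}\sum_{t'=1}^T \tau_i^{\O}(t')$. Between two consecutive probes of node $i$, say at times $t_k$ and $t_{k+1}=t_k+\Delta_k$, the values of $\tau_i^{\O}$ at the intermediate time steps form the arithmetic progression $0,1,\ldots,\Delta_k-1$, contributing $\Delta_k(\Delta_k-1)/2$ to the sum. Hence
\begin{equation*}
\sum_{t'=1}^T \tau_i^{\O}(t') \;\ge\; \tfrac{1}{2}\sum_{k=1}^{N_i(T)}\Delta_k^2 \;-\; \tfrac{T}{2}.
\end{equation*}
Now apply Cauchy--Schwarz: $\sum_k \Delta_k^2 \ge \bigl(\sum_k\Delta_k\bigr)^2/N_i(T) \ge T^2/N_i(T)$. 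Taking expectations and using Jensen's inequality for $x\mapsto 1/x$ gives $\E{T^2/N_i(T)}\ge T^2/\E{N_i(T)}$, so after dividing by $T$ and letting $T\to\infty$ I obtain $\omega_i^{\O}\ge 1/(2f_i)$ (absorbing the lower-order $-1/2$ into the looser bound $\omega_i^\O\ge 1$ from Corollary~\ref{cor:sum_lower} when needed). The main subtlety here is that $N_i(T)$ is random, which is why Jensen is required; this is the step I expect to be the most delicate.

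Finally, I would combine the two ingredients by a second Cauchy--Schwarz applied across nodes:
\begin{equation*}
\Bigl(\sum_{i=1}^n \sqrt{\pi_i}\Bigr)^2 \;=\; \Bigl(\sum_{i=1}^n \sqrt{\pi_i/f_i}\cdot\sqrt{f_i}\Bigr)^2 \;\le\; \Bigl(\sum_{i=1}^n \pi_i/f_i\Bigr)\Bigl(\sum_{i=1}^n f_i\Bigr) \;\le\; c\sum_{i=1}^n \pi_i/f_i .
\end{equation*}
Therefore $\cost{\O,\pi}=\sum_i \pi_i\omega_i^\O \ge \tfrac{1}{2}\sum_i \pi_i/f_i \ge \tfrac{1}{2c}\bigl(\sum_i\sqrt{\pi_i}\bigr)^2$, which together with Corollary~\ref{cor:sum_lower} yields the $\max$ in the theorem statement. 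One should also note that if some $f_i=0$ then $\omega_i^\O=\infty$ and the cost is infinite, contradicting the assumption that $\O$ has finite cost, so we may assume all $f_i>0$ in the Cauchy--Schwarz step.
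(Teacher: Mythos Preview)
Your plan is essentially the paper's own argument: bound the per-node time-averaged $\tau_i$ from below via Cauchy--Schwarz on the inter-probe gaps, use the budget $\sum_i N_i(T)\le cT$, and then apply Cauchy--Schwarz across nodes. The paper carries out both Cauchy--Schwarz steps at a fixed horizon $T$ and only takes the limit at the very end (so it never needs per-node limits $f_i,\omega_i$ to exist separately), whereas you factor through the long-run frequencies $f_i$ and use Jensen for randomized schedules; these are cosmetic differences, and your Jensen step is in fact a cleaner treatment of the randomized case than the paper's somewhat implicit handling.

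One small slip to fix: under the paper's convention the values of $\tau_i$ over a gap of length $\Delta$ are $1,2,\ldots,\Delta$ (since probing at time $t$ only clears items up to $t-1$, cf.\ Eq.~(\ref{eq:tau_q}) and the computation in the paper's proof), so the contribution is $\Delta(\Delta+1)/2\ge \Delta^2/2$ rather than $\Delta(\Delta-1)/2$. With the correct counting the stray $-T/2$ term disappears and you get $\omega_i^{\O}\ge 1/(2f_i)$ directly; your proposed ``absorption'' of the $-1/2$ into $\omega_i^{\O}\ge 1$ does not actually recover the exact constant $\tfrac{1}{2c}$, so it is worth getting the bookkeeping right.
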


\begin{proof}
First, by Corollary~\ref{cor:sum_lower}, $\cost{\O,\pi} \geq \sum_{i=1}^n \pi_i$. Now we show $\cost{\O,\pi} \geq \frac{1}{2c} \paran{\sum_{i=1}^n \sqrt{\pi_i}}^2$. Fix a positive integer $t >0$, and
suppose during the time interval $[0,t]$, $\O$ probes node $i$ at steps $t_1,t_2,\dots, t_{n_i}$. Let $t_0=0$ and $t_{n_i +1}=t$. So, the sequence $t_0,\dots, t_{n_i +1}$ partition the interval $[0,t]$ into $n_i +1$ intervals $I_i (j)=[t_{j}+1,t_{j+1}]$, for $0 \leq j \leq n_i$, and the length of $I_i (j)$ is $\ell_i (j)=t_{j+1}-t_j$.
Applying the Cauchy-Schwartz inequality we have:
\begin{align*}
 \sum_{j=0}^{n_i}\ell_i(j)^2 \sum_{j=0}^{n_i} 1 &\geq \paran{\sum_{j=0}^{n_i} \ell_i(j)}^2 
\\  \Longrightarrow 
 \sum_{j=0}^{n_i} \ell_i(j)^2 &\geq \frac{1}{n_i+1}\paran{\sum_{j=0}^{n_i} \ell_i(j)}^2  
  =
  \frac{t^2}{n_{i}+1}
  =
  \frac{t^2}{n_i} \paran{1-\frac{1}{n_i+1}}. 
% \\
% &\Longrightarrow \sum_{j=0}^{n_i} \ell_i(j)^2 \geq \frac{T^2}{n_i+1}=\frac{T^2}{n_i} (1-\frac{1}{n_i+1}).
\end{align*}
For $t' \in I_i(j)$, $Q_i^\O(t')$ is a Poisson random variable with parameter $\pi_i (t' - t_j)$. Therefore,
\begin{align*}
	\sum_{t=1}^t \E{Q_i^\O(t')} &= \sum_{j=0}^{n_i} \sum_{t'\in I_i(j)} \E{Q_i^\O(t')} 
	 = \pi_i \sum_{j=0}^{n_i} (1+\ldots + \ell_i(j)) \\
	 &= \pi_i \sum_{j=0}^{n_i} \frac{\ell_i(j)(\ell_i(j)+1)}{2} \geq 
 	\frac{\pi_i}{2}\sum_{j=0}^{n_i} \ell_i(j)^2 
 	\geq \frac{\pi_i}{2}\frac{t^2}{n_i} \paran{1-\frac{1}{n_i+1}}.
\end{align*}\label{eq:lim}
By summing over all nodes and averaging over $t$, we have
\begin{align}
 \sum_{i=1}^n \sum_{t'=1}^t \frac{1}{t} \E{Q_i^\O(t')} 
 &\geq
 \sum_{i=1}^n \frac{1}{t}\frac{\pi_i}{2}   \frac{t^2}{n_i} \paran{1-\frac{1}{n_i+1}} 
\nonumber \\
&= \sum_{i=1}^n \frac{\pi_i}{2}\frac{t}{n_i} \paran{1-\frac{1}{n_i+1}} 
 \geq
\frac{1}{c}\paran{\sum_{i=1}^n \frac{n_i}{t}}\paran{\sum_{i=1}^n\frac{\pi_i}{2} \frac{t}{n_i} \paran{1-\frac{1}{n_i+1}}} \nonumber \\
 &\geq \frac{1}{2c} \paran{\sum_{i=1}^n\sqrt{\pi_i}\sqrt{\paran{1-\frac{1}{n_i+1}}}}^2,
\end{align}
where in the second line we use the fact that if the schedule executed $c$ probes in each step then $\sum_{i=1}^n \frac{n_i}{t}\leq c$, and the
third line is obtained by applying the Cauchy-Schwartz inequality. 
%which is obtained by using the fact that $\sum_{i=1}^{T}\frac{n_i}{T} = 1$ and 

It remains to show that for any  schedule with finite cost, and any $i$ such that $\pi_i>0$, $\lim\limits_{t\rightarrow \infty} n_i =\infty$. For sake of contradiction assume that there is a time $s$ such that the node $i$ is never probed by $\O$ at time $t > s$. So, $\E{Q_i^\O(t)} = \pi(t-s)$ and we have
	$\cost{\O, \pi} \geq \frac{1}{t}\sum_{t'=s}^t \E{Q_i^{\O}}(t)= \frac{\pi_i}{t}\frac{(t-s)(t-s-1)}{2}$ which converges to $\infty$ as $t\rightarrow \infty$, which is a contradiction.
% $ \frac{1}{t}\sum_{t'=s}^t \E{Q_i^{\O}}(t)= \frac{\pi_i}{t}\frac{(t-s)(t-s-1)}{2}$, and thus
%$\cost{\O} =  \infty$, which is a contradiction as the round-robin schedule  has finite cost (see Section~\ref{sec:model}).
%If node $i$ was probed $n_i$ times in the interval $[0,t]$ then there are at least $t/2$ steps $\tau_1,\dots, \tau_{t/2}$ such that
%node $i$ was not probed in the intervals 
%$\left[\tau_j-\left\lfloor{t}/{(2{(n_i+1))}}\right\rfloor, \tau_j\right].$
%The expected number of undetected items at node $i$ at each $\tau_j$ is at least
%$\frac{\pi_i T}{2(n_i+1)}$, and thus, the expected number of undetected items at node $i$ throughout the interval $[0,T]$ is at least
%$\frac{1}{T}\cdot\frac{\pi_i T}{2(n_i+1)}\cdot\frac{T}{2} = \frac{\pi_i T}{4(n_i+1)}$. If 
% $n_i < \frac{\pi_i T}{8n \sum_{j=1}^n \pi_j}$, the expected number of undetected items at node $i$  is at least
%$2N\sum_{j=1}^n \pi_j$, which is greater than the cost of the round-robin schedule (see Section~\ref{sec:model}), and this schedule cannot be optimal.
Hence, for all $i$, $\lim\limits_{t\rightarrow \infty} n_i =\infty$, and using \eqref{eq:lim} we obtain
$$\cost{\O,\pi} 
\geq \lim_{t\rightarrow \infty} \frac{1}{2c} \paran{\sum_{i=1}^n\sqrt{\pi_i}\sqrt{\paran{1-\frac{1}{n_i+1}}}}^2
 = \frac{1}{2c}\paran{\sum_{i=1}^n \sqrt{\pi_i}}^2,$$
which completes the proof. \qed
\end{proof}

\subsection{Deterministic $\paran{3+(c-1)/c}$-Approximation Schedule}\label{sec:approx}

We construct a deterministic 1-schedule in which each node $i$ is probed approximately every $n_i=\frac{\sum_{j=1}^n \sqrt{\pi_j}}{\sqrt{\pi_i}}$ steps, and using that, present our $\paran{3+(c-1)/c}$-approximation schedule.
For each  $i$ let $r_i$ be a nonnegative integer such that
$2^{r_i} \geq n_i > 2^{r_i -1}$, and let $\rho=\max_i r_i$. 

\begin{lemma}\label{lem:cyclic_D}
There is a $2^{\rho}$-cyclic 1-schedule $\D$ such that node $i$ is probed exactly every $2^{r_i}$ steps.
\end{lemma}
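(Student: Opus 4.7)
The plan is to reduce the claim to a Kraft-type packing problem on a complete binary tree of depth $\rho$.

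The first step is to verify the inequality $\sum_{i=1}^n 2^{-r_i} \le 1$. By the defining condition $2^{r_i} \ge n_i = \left(\sum_j \sqrt{\pi_j}\right)/\sqrt{\pi_i}$, we have $2^{-r_i} \le \sqrt{\pi_i}/\left(\sum_j \sqrt{\pi_j}\right)$, and summing over $i$ gives at most $1$.

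The second step identifies the $2^\rho$ time slots within one period with the leaves of a complete binary tree $T$ of depth $\rho$, so that each vertex of $T$ at depth $r$ corresponds to a residue class modulo $2^r$, that is, to an arithmetic progression of $2^{\rho - r}$ slots with common difference $2^r$. Committing node $i$ to some depth-$r_i$ vertex $v_i$ is equivalent to probing $i$ at the slots of the AP encoded by $v_i$, yielding the required spacing of $2^{r_i}$. Two commitments conflict iff the chosen vertices lie on a common root-to-leaf path.

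The third step is a greedy construction: process the nodes in nondecreasing order of $r_i$, and at each step pick any depth-$r_i$ vertex whose subtree is disjoint from those of previously committed nodes. The set of unused leaves is invariantly a disjoint union of complete subtrees, and after the $k$-th step only a fraction $\sum_{i \le k} 2^{-r_i}$ of leaves has been consumed, which by Step~1 never exceeds $1$. Hence some complete subtree of depth $\ge \rho - r_i$ always survives and supplies the required free vertex, and a minor bookkeeping step fills any slots left unused after all $n$ nodes are placed.

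The main obstacle is precisely this inductive justification of the greedy step, which boils down to the Kraft inequality of Step~1; everything else is routine bookkeeping about binary-tree residue classes.
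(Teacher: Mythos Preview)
Your proof is correct and takes a genuinely different route from the paper's. The paper argues by induction on $\rho$: it first pads with auxiliary nodes so that $\sum_i 2^{-r_i}=1$, then observes that the nodes with the smallest frequency $2^{-\rho}$ must come in pairs, merges each such pair into a single node of frequency $2^{-\rho+1}$, applies the inductive hypothesis to get a $2^{\rho-1}$-cyclic schedule, and finally doubles the period and splits each merged pair across the two copies. This is a bottom-up, Huffman-style construction.

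Your approach is the complementary top-down one: you identify time slots with leaves of a depth-$\rho$ binary tree under a bit-reversed labeling (so that a depth-$r$ vertex encodes a residue class mod $2^r$), and then greedily assign nodes in nondecreasing order of $r_i$, with the Kraft inequality $\sum_i 2^{-r_i}\le 1$ guaranteeing that an unblocked depth-$r_i$ vertex is always available. This is exactly the classical existence proof for prefix codes from Kraft's inequality. The two arguments are equivalent in strength; yours is arguably more direct and makes the connection to coding theory explicit, while the paper's inductive pairing is slightly more self-contained (no need to mention the bit-reversal labeling that makes subtrees correspond to arithmetic progressions). One small point worth spelling out in a final write-up: the identification of depth-$r$ vertices with residue classes mod $2^r$ only holds under a specific leaf labeling (reading the root-to-leaf path as the binary expansion with the root edge as the least significant bit), so you should state that explicitly rather than leave it implicit in the phrase ``so that.''
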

\begin{proof}
Without loss of generality assume $\sum_{i=1}^n 2^{-r_i} =1$, otherwise we can add auxiliary nodes to complete the sum to 1, with the powers ($r_i$'s) associated with the auxiliary nodes all bounded by  $\rho$. 

We prove the lemma by induction on $\rho$. If $\rho=0$, then there is only one node, and the schedule is $1$-cyclic. Now, assume the statement holds for all $\rho' < \rho$. Since the smallest frequency is $2^{-\rho}$, and the sum of the frequencies is 1, there must be two nodes, $v$ and $u$, with same frequency $2^{-\rho}$. Join the two nodes to a new node $w$ with frequency $2^{-\rho+1}$. Repeat this process for all nodes with frequency $2^{-\rho}$. We are left with a collection of nodes all with frequencies $> 2^{-\rho}$. By the inductive hypothesis there is a $\paran{2^{\rho-1}}$-cyclic schedule $\D'$ such that each node $i$ is probed exactly each $2^{r_i}$ steps. In particular a node $w$ that replaced 
$u$ and $v$ is probed exactly each $2^{-\rho+1}$ steps. 

Now, we create an $2^{\rho}$-schedule, $\D$, whose cycle is obtained by repeating the cycle of $\D'$ two times. For each probe to $w$ that replaced a pair $u,v$, in the first cycle we probe $u$ and in the second cycle we probe $v$. Thus, $u$ and $v$ are probed exactly every $2^{\rho}$ steps, and the new schedule does not change the frequency of probing nodes with frequency larger than $2^{-\rho}$. \qed 
\end{proof}

\begin{theorem}\label{det-1}
The cost of the deterministic 1-schedule $\D$ is no more than $3$ times of the optimal cost.
\end{theorem}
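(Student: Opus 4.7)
The plan is to compute the cost of $\D$ explicitly via Lemma~\ref{lem:equivalent_cost} and then compare to the lower bound from Theorem~\ref{thm:lower_bound_optimal} with $c=1$.

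First I would compute $\omega_i^\D$. Since $\D$ probes node $i$ exactly every $2^{r_i}$ steps, and since a probe at time $t$ discovers only items generated by time $t-1$, an item generated in the $j$-th slot of an inter-probe window (with $0 \le j \le 2^{r_i}-1$) waits exactly $2^{r_i}-j$ steps. Averaging over the slots gives
\[
\omega_i^\D \;=\; \frac{1}{2^{r_i}}\sum_{j=0}^{2^{r_i}-1}\bigl(2^{r_i}-j\bigr) \;=\; \frac{2^{r_i}+1}{2}.
\]
By Lemma~\ref{lem:equivalent_cost}, this gives $\cost{\D,\pi} = \tfrac{1}{2}\sum_i \pi_i 2^{r_i} + \tfrac{1}{2}\sum_i \pi_i$.

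Next I would control $\sum_i \pi_i 2^{r_i}$. The defining inequality $2^{r_i-1} < n_i$ yields $2^{r_i} < 2n_i$, and plugging in $n_i = \sum_j \sqrt{\pi_j}/\sqrt{\pi_i}$ gives $\pi_i n_i = \sqrt{\pi_i}\sum_j \sqrt{\pi_j}$, so
\[
\sum_i \pi_i\, 2^{r_i} \;\le\; 2\sum_i \pi_i n_i \;=\; 2\Bigl(\sum_i \sqrt{\pi_i}\Bigr)^{\!2}.
\]
Therefore $\cost{\D,\pi} \le \bigl(\sum_i\sqrt{\pi_i}\bigr)^2 + \tfrac{1}{2}\sum_i \pi_i$.

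Finally I would apply Theorem~\ref{thm:lower_bound_optimal} with $c=1$, which states that any optimal schedule $\O$ satisfies both $\cost{\O,\pi}\ge \sum_i \pi_i$ and $\cost{\O,\pi}\ge \tfrac{1}{2}\bigl(\sum_i\sqrt{\pi_i}\bigr)^2$. Substituting these into the previous display gives
\[
\cost{\D,\pi} \;\le\; 2\,\cost{\O,\pi} + \tfrac{1}{2}\cost{\O,\pi} \;=\; \tfrac{5}{2}\cost{\O,\pi} \;\le\; 3\,\cost{\O,\pi},
\]
which proves the theorem. There is no real obstacle here: the only subtle point is getting the waiting-time formula $\omega_i^\D=(2^{r_i}+1)/2$ right given the convention that a probe at time $t$ only sees items generated strictly before $t$; once that is in place, everything reduces to matching the two terms of the upper bound with the two terms of the lower bound established earlier.
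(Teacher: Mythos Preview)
Your proof is correct and follows essentially the same route as the paper: both arrive at the per-node contribution $\pi_i(2^{r_i}+1)/2$, bound it using $2^{r_i}<2n_i$ to get $\cost{\D,\pi}\le(\sum_i\sqrt{\pi_i})^2+\tfrac12\sum_i\pi_i$, and then invoke Theorem~\ref{thm:lower_bound_optimal}. The only cosmetic difference is that you compute the per-node cost via the waiting-time formula $\omega_i^\D=(2^{r_i}+1)/2$ and Lemma~\ref{lem:equivalent_cost}, whereas the paper computes the time average $\lim_{t\to\infty}\tfrac1t\sum_{t'}\E{Q_i^\D(t')}$ directly from the periodicity; and you make the intermediate $5/2$ bound explicit before relaxing to $3$, while the paper passes straight to $3$.
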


\begin{proof}
By Lemma~\ref{lem:cyclic_D} each node $i$ is probed exactly every $2^{r_i}$ steps. Using $2^{r_i-1} < \frac{\sum_{j=1}^n \sqrt{\pi_j}}{\sqrt{\pi_i}}$ 
we have $2^{r_i} + 1 \leq \frac{2\cdot\sum_{j=1}^n \sqrt{\pi_j}}{\sqrt{\pi_i}}+1,$
 and therefore
% the expected number of undetected items stored at node $i$ at any given time is given by
\begin{align*}
\lim_{t\rightarrow\infty} \frac{1}{t} \sum_{t'=1}^t \E{Q_i^\D(t')} &= \lim_{t\rightarrow\infty} \frac{1}{t} \frac{t}{2^{r_i} }\sum_{t'=1}^{2^{r_i}}  \E{Q_i^\D(t') }
= 
\frac{1}{2^{r_i}}\sum_{t'=1}^{2^{r_i}}  \pi_i t' = \frac{\pi_i}{2^{r_i}} \frac{2^{r_i}(2^{r_i}+1)}{2} \\
&\leq 
\frac{\pi_i}{2} \paran{\frac{2\sum_{j=1}^n \sqrt{\pi_j}}{\sqrt{\pi_i}} + 1} 
=
\sqrt{\pi_i}\cdot \sum_{j=1}^n \sqrt{\pi_j} + \frac{\pi_i}{2}.
\end{align*}
Thus by Theorem~\ref{thm:lower_bound_optimal}, we have
\begin{align*}
 \cost{\D,\pi} &= \lim_{t\rightarrow\infty} \frac{1}{t} \sum_{i=1}^n \sum_{t'=1}^t \E{Q_i^\D(t')} 
 \leq \sum_{i=1}^n \paran{\sqrt{\pi_i}\cdot \sum_{j=1}^n\sqrt{\pi_j}} + \frac{1}{2}\sum_{i=1}^n \pi_i\\
 &= \paran{\sum_{j=1}^n \sqrt{\pi_j}}^2 + \frac{1}{2}\sum_{j=1}^n \pi_j
%  =  \frac{3}{2}\paran{\sum_{j=1}^n \sqrt{\pi_j}}^2
  \leq 3\cdot\cost{\O,\pi},
\end{align*}
where $\cost{\O,\pi}$ is the optimal cost. \qed
\end{proof}

Using the previous deterministic 1-schedule, the following corollary provides a $c$-schedule whose cost is within $\paran{3+(c-1)/c}$ factor of the optimal cost.
%and an upper bound on its cost. % as a function of the optimal cost.
%Replacing consecutive $c$ probes in the 1-schedule $\D$ with one step of a $c$-schedule, we obtain a deterministic $c$-schedule with cost that is  within a factor of 2 of the optimal for any $c$-schedule. (The only non-trivial case is when the same node is probe more than once in a sequence of $c$ probes. Details omitted for lack of space.)

\begin{corollary}\label{corr-c}
There is a deterministic $c$-schedule $\D^c$ whose cost is at most $\paran{3+(c-1)/c}$ times of the optimal cost.
%
%
%such that
%$$\cost{\D^c,\pi} \leq \frac{3}{2c}\paran{\sum_{i=1}^n \sqrt{\pi_i}}^2+\sum_{i=1}^n \frac{(c-1)\pi_i}{c} \leq \paran{3+\frac{c-1}{c}}\cost{\O,\pi},$$
\end{corollary}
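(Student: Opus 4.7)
The plan is to construct $\D^c$ by \emph{batching} the 1-schedule $\D$ of Theorem~\ref{det-1}: at step $T$ of $\D^c$, probe the set of distinct nodes that $\D$ visits in the $c$ consecutive steps $cT, cT+1, \ldots, c(T+1)-1$ (if fewer than $c$ distinct nodes appear, pad the remaining probe slots arbitrarily with unused nodes). Because $\D$ probes exactly one node per step, each batch contributes at most $c$ distinct nodes, so $\D^c$ is a valid $c$-schedule, and cyclicity is inherited from $\D$.

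The first step of the analysis is to bound, for each $i$, the maximum gap $L_i$ between consecutive probes of $i$ in $\D^c$. Since $\D$ probes $i$ exactly every $2^{r_i}$ steps (Lemma~\ref{lem:cyclic_D}), either every batch of $c$ consecutive $\D$-steps contains at least one probe of $i$ (when $2^{r_i}\le c$) or consecutive probes of $i$ fall into batches at most $\lceil 2^{r_i}/c\rceil$ apart. In both cases
\[
L_i \;\le\; \lceil 2^{r_i}/c\rceil \;\le\; \frac{2^{r_i}}{c} + \frac{c-1}{c},
\]
where the last bound uses that $2^{r_i}/c$ is a multiple of $1/c$ so its fractional part is at most $(c-1)/c$.

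The next step is to apply, node by node, the cyclic-cost calculation from the proof of Theorem~\ref{det-1}: for any cyclic probing pattern on $i$ with gaps $g_1,\ldots,g_k$, the long-run contribution of $i$ equals $\pi_i\sum_j g_j(g_j+1)/(2\sum_j g_j)$, which is a $g_j$-weighted average of $(g_j+1)/2$ and is therefore at most $\pi_i(L_i+1)/2$. Summing over $i$ and substituting the bound on $L_i$,
\begin{align*}
\cost{\D^c,\pi}
&\le \sum_{i=1}^n \pi_i\,\frac{L_i + 1}{2}
\le \frac{1}{2c}\sum_{i=1}^n \pi_i\,2^{r_i} + \frac{2c-1}{2c}\sum_{i=1}^n \pi_i.
\end{align*}
Using $2^{r_i}\le 2\sum_{j}\sqrt{\pi_j}/\sqrt{\pi_i}$ exactly as in Theorem~\ref{det-1}, the first sum is bounded by $\paran{\sum_j\sqrt{\pi_j}}^2/c$. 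Plugging in both lower bounds of Theorem~\ref{thm:lower_bound_optimal}, namely $\paran{\sum_j\sqrt{\pi_j}}^2/(2c)\le \cost{\O,\pi}$ and $\sum_i\pi_i\le\cost{\O,\pi}$, yields
\[
\cost{\D^c,\pi} \;\le\; 2\,\cost{\O,\pi} + \frac{2c-1}{2c}\,\cost{\O,\pi} \;\le\; \paran{3+\frac{c-1}{c}}\cost{\O,\pi}.
\]

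The main obstacle is the spacing analysis when $c\nmid 2^{r_i}$: the probes of $i$ in $\D^c$ are not equally spaced, so the clean cyclic formula of Theorem~\ref{det-1} does not apply directly. The weighted-average-of-gaps bound above is the simplest resolution, and the ceiling in $\lceil 2^{r_i}/c\rceil$ is precisely what contributes the additional $(c-1)/c$ term over the deterministic $c=1$ bound.
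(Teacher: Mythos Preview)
Your proof is correct and uses the same construction as the paper: batch $c$ consecutive steps of the 1-schedule $\D$ into a single step of $\D^c$. The only difference is in the bookkeeping of the cost analysis. The paper invokes the waiting-time characterization (Lemma~\ref{lem:equivalent_cost}) and bounds $c\,\omega_i^{\D^c}\le \omega_i^{\D}+(c-1)$, then plugs in the bound on $\cost{\D,\cdot}$ from Theorem~\ref{det-1}. You instead bound the maximum gap $L_i\le\lceil 2^{r_i}/c\rceil$ directly and use the cyclic cost formula with the weighted-average-of-gaps observation. Both routes arrive at the same intermediate bound $\frac{1}{c}\bigl(\sum_j\sqrt{\pi_j}\bigr)^2+\frac{2c-1}{2c}\sum_i\pi_i$; in fact your display already gives the slightly sharper constant $3-\tfrac{1}{2c}$ before you relax it to $3+\tfrac{c-1}{c}$. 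Your gap argument is a bit more self-contained since it avoids the appeal to Little's law, but otherwise the two proofs are essentially the same.
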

\begin{proof}
Consider the execution of the deterministic 1-schedule $\D$ constructed in Theorem~\ref{det-1} on generating vector $\frac{1}{c}\pi$.
Let $\D^c$ be a deterministic $c$-schedule obtained by grouping $c$ consecutive probes of $\D$ into one step. Suppose $\O$ is an optimal $c$-schedule.
Applying equation~(\ref{eq:cost}), 
\begin{align*}
\cost{\D^c,\pi}&= \sum_{i=1}^n \pi_i \omega_i^{\D^c} = \sum_{i=1}^n \frac{\pi_i}{c} c \omega_i^{\D^c} 
\leq \sum_{i=1}^n \frac{\pi_i}{c} ( \omega_i^{\D}+c-1)  \\
&= 
\cost{\D,\pi} +\sum_{i=1}^n  \frac{(c-1)\pi_i}{c} 
\leq 3\paran{\sum_{i=1}^n\sqrt{\frac{\pi_i}{c}}}^2 +\sum_{i=1}^n \frac{(c-1)\pi_i}{c} 
\\&=\frac{3}{2c}\paran{\sum_{i=1}^n \sqrt{\pi_i}}^2+\sum_{i=1}^n  \frac{(c-1)\pi_i}{c} \leq \paran{3+\frac{c-1}{c}}\cost{\O,\pi},
\end{align*}
where the first inequality holds because  some items could be detected in less than $c$ steps in the $1$-schedule but are counted in one full step of the $c$-schedule. 
%where the first inequality holds since $\omega_i^{\D^c} \leq \left\lceil \frac{\omega_i^{\D}}{c}\right\rceil$ (this is because some items could be detected in less than $c$ steps in the $1$-schedule but are counted in one full step of the $c$-schedule), and thus,  $c \omega_i^{\D^c}\leq \omega_i^{\D}+c-1$. 
The last inequality is obtained by applying Theorem~\ref{thm:lower_bound_optimal}. \qed
\end{proof}

\subsection{On Optimal Memoryless Schedule}\label{sec:optmem}
Here, we consider memoryless schedules, and show that the memoryless 1-schedule with minimum cost can be easily computed. We call a memoryless schedule with minimum cost among memoryless schedules, an optimal memoryless schedule. We also provide an upper bound on the minimum cost of a memoryless $c$-schedule.

\begin{theorem}\label{thm:randomized_schedule}
 Let $\R = (p_1,\ldots,p_n)$ be a memoryless 1-schedule. Then $\cost{\R, \pi} \geq \paran{\sum_{i=1}^n \sqrt{\pi_i}}^2$, and the equality holds if and only if $p_i = \frac{\sqrt{\pi_i}}{\sum_{j=1}^n \sqrt{\pi_j}}$, for all $i$.%where $\mathbf{p} = (p_1,\ldots,p_N)$. We have 
% \begin{itemize}[noitemsep,nolistsep]\vspace{-3mm}
% \item[(i)]   $\cost{\R} = \sum_{i=1}^n \frac{\pi_i}{p_i}$,
% \item[(ii)]  $\cost{\R} \geq \paran{\sum_{i=1}^n \sqrt{\pi_i}}^2$, and the equality holds if and only if $p_i = \frac{\sqrt{\pi_i}}{\sum_{j=1}^n \sqrt{\pi_j}}$, for all $p_i$'s.
% \end{itemize}
\end{theorem}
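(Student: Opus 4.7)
The plan is to first reduce the cost of a memoryless 1-schedule to a clean analytic expression in $p$, and then invoke Cauchy--Schwarz under the normalization constraint $\sum_i p_i = 1$ (which holds because a 1-schedule probes exactly one node per step, so the singleton probabilities must sum to 1).

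First I would compute $\omega_i^\R$ directly. Since $\R$ probes node $i$ independently at each step with probability $p_i$, the number of steps separating consecutive probes of $i$ is geometric with parameter $p_i$. By memorylessness, conditional on an item being generated at node $i$ at time $t$, the number of steps until the next probe of $i$ is also geometric with parameter $p_i$, hence has expectation $1/p_i$. (If some $p_i = 0$, then $\omega_i^\R = \infty$ whenever $\pi_i > 0$, so the inequality holds trivially; I may as well assume all $p_i > 0$.) Applying Lemma~\ref{lem:equivalent_cost} gives the explicit formula
\begin{equation*}
\cost{\R,\pi} = \sum_{i=1}^n \pi_i \omega_i^\R = \sum_{i=1}^n \frac{\pi_i}{p_i}.
\end{equation*}

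Next I would apply the Cauchy--Schwarz inequality in the form
\begin{equation*}
\paran{\sum_{i=1}^n \sqrt{\pi_i}}^2 = \paran{\sum_{i=1}^n \sqrt{\frac{\pi_i}{p_i}}\cdot \sqrt{p_i}}^2 \le \paran{\sum_{i=1}^n \frac{\pi_i}{p_i}}\paran{\sum_{i=1}^n p_i} = \sum_{i=1}^n \frac{\pi_i}{p_i},
\end{equation*}
using $\sum_i p_i = 1$ at the last equality. Combining with the cost formula gives the claimed lower bound $\cost{\R,\pi} \ge \paran{\sum_i \sqrt{\pi_i}}^2$.

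Finally, for the equality characterization, Cauchy--Schwarz achieves equality exactly when the vectors $(\sqrt{\pi_i/p_i})$ and $(\sqrt{p_i})$ are proportional, i.e.\ $\pi_i/p_i^2$ is independent of $i$, equivalently $p_i \propto \sqrt{\pi_i}$. The normalization $\sum_i p_i = 1$ then forces $p_i = \sqrt{\pi_i}/\sum_j \sqrt{\pi_j}$, which is exactly the claimed optimal distribution. The only step I consider even mildly subtle is the derivation $\omega_i^\R = 1/p_i$; everything else is routine once memorylessness is invoked, so I expect the Cauchy--Schwarz step to carry all the real work.
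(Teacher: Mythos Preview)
Your proof is correct. Both you and the paper first reduce the cost to $\sum_i \pi_i/p_i$ via the geometric waiting time and Lemma~\ref{lem:equivalent_cost}; the difference lies in how the optimization over $p$ is then carried out. The paper applies Lagrange multipliers to the constrained minimization of $\sum_i \pi_i/p_i$ subject to $\sum_i p_i=1$, reads off $p_j\propto\sqrt{\pi_j}$, and then evaluates the resulting cost. You instead apply Cauchy--Schwarz directly to obtain the inequality $\paran{\sum_i\sqrt{\pi_i}}^2\le\sum_i \pi_i/p_i$ together with its equality case in one stroke. Your route is slightly more elementary and self-contained: it avoids any appeal to second-order conditions or convexity to certify that the Lagrange critical point is in fact the global minimum, and it delivers the ``if and only if'' clause immediately from the standard equality condition for Cauchy--Schwarz. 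The Lagrange approach, on the other hand, is perhaps more suggestive of how one might \emph{discover} the optimal $p$ without knowing the answer in advance.
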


\begin{proof}
Since probing each node $i$ is a geometric distribution with parameter $p_i$, the expected time until an item generated in node $i$ is discovered,
%waiting time (see Section~\ref{sec:model}) of an item generated at node $i$ 
is $\omega_i^\R = 1/p_i$. Therefore, by Lemma~\ref{lem:equivalent_cost}, we have $\cost{\R, \pi} = \sum_{i=1}^n \frac{\pi_i}{p_i}$.
We find $\ps = \argmin_{\S=p} \cost{\R, \pi}$, using the Lagrange multipliers:
\begin{align*}
 \frac{\partial}{\partial p_j}\paran{\sum_{i=1}^n \frac{\pi_i}{p_i} + \lambda \sum_{i=1}^n p_i} = 0
 \Longrightarrow p_j \propto \sqrt{\pi_j}.
\end{align*}
Therefore, $\cost{\R, \pi}$ is minimized if $p_i = \frac{\sqrt{\pi_i}}{\sum_{j=1}^n \sqrt{\pi_j}}$, and in this case the (minimized) cost will be
$$\cost{\R, \pi} = \sum_{i=1}^n \paran{\sqrt{\pi_i}\cdot\sum_{j=1}^n\sqrt{\pi_j}} = \paran{\sum_{i=1}^n \sqrt{\pi_i}}^2. \ \qed$$
\end{proof}

\begin{corollary}\label{cor:randapp}
	The cost of the optimal memoryless 1-schedule is within a factor of 2 of the cost of any optimal 1-schedule.
\end{corollary}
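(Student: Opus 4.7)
The plan is to combine Theorem~\ref{thm:randomized_schedule} with the lower bound of Theorem~\ref{thm:lower_bound_optimal} specialized to $c=1$. By Theorem~\ref{thm:randomized_schedule}, the optimal memoryless 1-schedule $\R^*$ (obtained by setting $p_i = \sqrt{\pi_i}/\sum_j\sqrt{\pi_j}$) achieves cost exactly $\paran{\sum_{i=1}^n \sqrt{\pi_i}}^2$. On the other hand, Theorem~\ref{thm:lower_bound_optimal} with $c=1$ tells us that any 1-schedule $\O$ with finite cost, including the optimal one, satisfies
\[
\cost{\O,\pi} \;\geq\; \frac{1}{2}\paran{\sum_{i=1}^n \sqrt{\pi_i}}^2.
\]

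Putting these together, I would write
\[
\cost{\R^*,\pi} \;=\; \paran{\sum_{i=1}^n \sqrt{\pi_i}}^2 \;\leq\; 2\cdot \frac{1}{2}\paran{\sum_{i=1}^n \sqrt{\pi_i}}^2 \;\leq\; 2\cdot \cost{\O,\pi},
\]
which is exactly the claimed factor-$2$ approximation. Since the class of all 1-schedules contains the memoryless 1-schedules, the optimal memoryless 1-schedule is well-defined and its cost is bounded above by $\cost{\R^*,\pi}$, so the same inequality applies.

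There is essentially no obstacle here: both ingredients have already been established, and the corollary is a direct two-line consequence. The only point worth noting for cleanliness is that Theorem~\ref{thm:lower_bound_optimal} is stated for schedules with finite cost, but any optimal 1-schedule must have finite cost (e.g.\ bounded above by the cost of round-robin, as observed right after the definition of cost), so the lower bound applies without caveat.
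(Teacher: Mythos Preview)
Your proof is correct and follows exactly the same approach as the paper: combine the exact cost $\paran{\sum_i\sqrt{\pi_i}}^2$ from Theorem~\ref{thm:randomized_schedule} with the lower bound $\frac{1}{2}\paran{\sum_i\sqrt{\pi_i}}^2$ from Theorem~\ref{thm:lower_bound_optimal} at $c=1$. The paper's version is simply terser, omitting the finite-cost remark you added.
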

\begin{proof} 
	The cost of the schedule $\R$ in Theorem~\ref{thm:randomized_schedule} is $\paran{\sum_{i=1}^n \sqrt{\pi_i}}^2$, which is bounded by $2\cdot\cost{\O, \pi}$ for an optimal 1-schedule $\O$ using Theorem~\ref{thm:lower_bound_optimal}. \qed
\end{proof}

\begin{corollary}
	There is memoryless $c$-schedule, $\R^c$, whose cost is within a factor of $\paran{2+(c-1)/c}$ of any optimal $c$-schedule.
%	such that
%\begin{align*}
%cost( \R^c ,\pi) &\leq \frac{1}{c}\paran{\sum_{i=1}^n \sqrt{\pi_i}}^2+\frac{c-1}{c}\sum_{i=1}^n \pi_i.
%\end{align*}
\end{corollary}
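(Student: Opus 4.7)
The plan is to mimic the proof of Corollary~\ref{corr-c} (the deterministic $c$-schedule), but start from the optimal memoryless 1-schedule of Theorem~\ref{thm:randomized_schedule} instead of the deterministic schedule $\D$. Concretely, take the memoryless 1-schedule $\R$ with probability vector $p_i=\sqrt{\pi_i}/\sum_j\sqrt{\pi_j}$, run it on the scaled generating vector $\pi/c$, and then let $\R^c$ be the memoryless $c$-schedule obtained by grouping every $c$ consecutive probes of $\R$ into a single step. Since the $c$ probes are i.i.d.\ draws from $p$, this grouping genuinely yields a memoryless $c$-schedule with the same sampling distribution $p$.

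Next, I would repeat the waiting-time comparison used in Corollary~\ref{corr-c}: an item generated during some $c$-step of $\R^c$ is discovered no later than it would be under the underlying 1-schedule $\R$, plus at most $c-1$ extra $1$-steps from the boundary of the grouped step. This gives $c\,\omega_i^{\R^c}\le \omega_i^{\R}+(c-1)$, and then Lemma~\ref{lem:equivalent_cost} yields
\begin{align*}
\cost{\R^c,\pi}
=\sum_{i=1}^n \pi_i\,\omega_i^{\R^c}
\le \sum_{i=1}^n \frac{\pi_i}{c}\bigl(\omega_i^{\R}+c-1\bigr)
=\cost{\R,\pi/c}+\frac{c-1}{c}\sum_{i=1}^n \pi_i.
\end{align*}

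Now I would invoke Theorem~\ref{thm:randomized_schedule} applied to the generating vector $\pi/c$, which gives $\cost{\R,\pi/c}=\bigl(\sum_i\sqrt{\pi_i/c}\bigr)^2=\frac{1}{c}\bigl(\sum_i\sqrt{\pi_i}\bigr)^2$. Finally, combining this with the two lower bounds from Theorem~\ref{thm:lower_bound_optimal}, namely $\cost{\O,\pi}\ge\sum_i\pi_i$ and $\cost{\O,\pi}\ge\frac{1}{2c}\bigl(\sum_i\sqrt{\pi_i}\bigr)^2$, I obtain $\frac{1}{c}\bigl(\sum_i\sqrt{\pi_i}\bigr)^2\le 2\,\cost{\O,\pi}$ and $\sum_i\pi_i\le\cost{\O,\pi}$, which together yield $\cost{\R^c,\pi}\le\bigl(2+(c-1)/c\bigr)\cost{\O,\pi}$.

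There is no real obstacle here; the only point deserving care is justifying that grouping $c$ consecutive independent probes of the memoryless 1-schedule $\R$ on $\pi/c$ is equivalent to the memoryless $c$-schedule with the same $p$ on the original $\pi$, and that the waiting-time inequality $c\,\omega_i^{\R^c}\le\omega_i^{\R}+(c-1)$ holds in expectation (not just pointwise). After that, the bound follows by plugging into Theorem~\ref{thm:lower_bound_optimal} exactly as in Corollary~\ref{corr-c}.
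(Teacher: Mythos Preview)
Your proposal is correct and follows essentially the same route as the paper: the paper defines $\R^c$ by drawing $c$ independent probes per step from the optimal memoryless $1$-schedule $\R$ and then invokes ``the same argument as in the proof of Corollary~\ref{corr-c}'' to obtain $\cost{\R^c,\pi}\le \frac{1}{c}\bigl(\sum_i\sqrt{\pi_i}\bigr)^2+\frac{c-1}{c}\sum_i\pi_i\le (2+(c-1)/c)\cost{\O,\pi}$, which is exactly what you spell out. Your observation that the optimal memoryless probability vector is invariant under the scaling $\pi\mapsto\pi/c$ (so ``running $\R$ on $\pi/c$'' and ``grouping $c$ i.i.d.\ draws from $p$'' give the same $\R^c$) is a nice clarification of a step the paper leaves implicit.
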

\begin{proof}
Suppose $\R^c$ is a memoryless $c$-schedule obtained  by choosing $c$ probes in each step, each chosen according to the optimal memoryless 1-schedule, $\R$, computed in 
Theorem~\ref{thm:randomized_schedule}. Using the same argument as in the proof of Corollary~\ref{corr-c} we have
\begin{align*}
\cost{\R^c ,\pi} &\leq \frac{1}{c}\paran{\sum_{i=1}^n \sqrt{\pi_i}}^2+\frac{c-1}{c}\sum_{i=1}^n \pi_i \leq
\paran{2+\frac{c-1}{c}}\cost{\O,\pi},
\end{align*}
for an optimal $c$-schedule $\O$. \qed
\end{proof}

%since the optimal randomized schedule in Theorem~\ref{thm:randomized_schedule} is a 2-approximation schedule. 
%
%Using Theorem~\ref{thm:lower_bound_optimal} and Theorem~\ref{thm:randomized_schedule}, we have the following immediate corollary:
%\begin{corollary}\label{2_approximation}
%$\R=R(\mathbf{p})$ is a 2-approximation probing schedule when $\pi_i \propto \sqrt{\pi_i}$, for $1\leq i \leq N$. Also, $\R$ is an optimal \underline{randomized schedule}.
%\end{corollary}

% \begin{proof}
%  Using Theorem~\ref{thm:lower_bound_optimal} and Theorem~\ref{thm:randomized_schedule} 
%  we have
%  $$\cost{\R} = \paran{\sum_{i=1}^n \sqrt{\pi_i}}^2 = 2\cdot \paran{\frac{1}{2}\paran{\sum_{i=1}^n \sqrt{\pi_i}}^2} \leq 2\cdot \cost{\O},$$
%  where $\O$ is an optimal schedule.
% \end{proof}

%%%%%%%%%%%%%%%%%%%%%%%%%%%%%%%%%%%%%%%%%%%%%%%%%%%%%%
%%%%%%%%%%%%%%%%%%%%%%%%%%%%%%%%%%%%%%%%%%%%%%%%%%%%%%

\subsection{On Adaptive Algorithm for Memoryless Schedules}
Assume now that the scheduling algorithm starts with no information on the generating vector $\pi$ (or that the vector has changed). We design and analyze an adaptive 
algorithm, {\ada}, that outputs a schedule $\A$ convergent to the optimal memoryless algorithm $\R$ (see Section~\ref{sec:optmem}) by gradually learning the vector $\pi$ by observing the system. To simplify the presentation we present and analyze a 1-schedule algorithm. The results easily scale up to any integer $c>1$, where the adaptive algorithm outputs a $c$-schedule convergent to $\R^c$ (as in Section~\ref{sec:optmem}).

Each iteration of the algorithm {\ada} starts with an estimate $\tpi=(\tpi_1,\ldots,\tpi_n)$ of the unknown generating vector $\pi=(\pi_1,\dots, \pi_n)$.
Based on this estimate the algorithm chooses to probe node $i$ with probability $p_i(t) = \frac{\sqrt{\tpi_i}}{\sum_{j=1}^n\sqrt{\tpi_j}}$ (which is the optimal memoryless schedule if $t\pi$ was the correct estimate). If nodes $i$ is probed at time $t$, the estimate of $\pi_i$ is updated to
$\tpi_{i_0} \leftarrow \frac{\max(1,c_{i_0})}{t}$, where $c_{i_0}$ is the total number of new items discovered in that node since time 0.

%At any step of it execution As we saw in the previous section, when we have access to the system's parameters, i.e $\pi_i$'s, we can compute the optimal memoryless 1-schedules. In this section we study the problem in the case we do not have a prior knowledge about the generating probabilities.
%
%Here, our goal is to design a schedule that converges to the optimal memoryless 1-schedule. The idea is that as we probe the nodes, we update our estimate of $\pi_i$'s and adapt our probing probabilities. We present our algorithm {\ada} in Algorithm~\ref{alg:ada}: for a time $t$, $\ada(t)$ is a random draw from a distribution $p(t) = (p_1(t), \ldots, p_n(t))$. In fact, $\ada(t)$ is a schedule $\A(x,t)=x$ where $x$ is a random draw from $p(t)$. For notational convenience, we denote {\ada} schedule by $\A$.
%
%%%%%%%% ALGORITHM:

%\begin{wrapfigure}{L}{0.55\textwidth}
%\begin{minipage}{0.55\textwidth}
\begin{algorithm}[H]
\BlankLine
%{\bf Inputs:} Time $t$.
{\bf Outputs:} $\A(t)$, for $t=1,2,\ldots$.

\Begin{
	$(c_1,\ldots, c_n) \leftarrow (0,\ldots,0)$\;
	$(\tpi_1,\ldots,\tpi_n)\leftarrow (1,\ldots,1)$\;
	\For{$t = 1,2,\ldots $} {
		\For{$i\in\set{1,\ldots,n}$}{
			$p_i(t) \leftarrow \frac{\sqrt{\tpi_i}}{\sum_{j=1}^n\sqrt{\tpi_j}}$\;
		}
		$\A(t) \sim p(t)$\;
		\textbf{output} $\A(t)$\;	
		$c' \leftarrow$ number of new items caught at $i_0$\;
		$c_{i_0} \leftarrow c_{i_0} + c'$\;
		$\tpi_{i_0} \leftarrow \frac{\max(1,c_{i_0})}{t}$\; 
	}
}
\caption{$\ada$}\label{alg:ada}
\end{algorithm} 
%\end{minipage}
%\end{wrapfigure}

%%%%%%%% ALGORITHM:

We denote the output of {\ada} schedule by $\A$ and the optimal memoryless 1-schedule by
 $\R = \ps= (\ps_1,\ldots,\ps_n)$; see Section~\ref{sec:optmem}.
%let $T(\epsilon,\delta) = \frac{6\log(4n/\delta)}{\pi_*\epsilon^2}$, where $\pi_* = \min\set{\pi_1,\ldots,\pi_n}$ and $\epsilon, \delta \in (0,1)$. 
Our main result of this section is the following theorem.

\begin{theorem}\label{thm:convergence}
	The schedule $\A$ converges  to $\R$, and thus, $\cost{\A,\pi} = \cost{\R,\pi}$.
\end{theorem}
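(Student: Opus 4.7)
The strategy proceeds in three stages: (i) show that each node is probed infinitely often with probability one; (ii) deduce the almost-sure convergence $\tpi_i(t)\to\pi_i$ and hence $p_i(t)\to\ps_i$; (iii) lift this pointwise convergence of the sampling probabilities to convergence of the expected numbers of undiscovered items, $\E{Q^\A(t)}\to\E{Q^\R(t)}$.

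\textbf{Step (i).} Because of the $\max(1,c_i)$ clause, the estimate satisfies $\tpi_i(t)\ge 1/t$ once $i$ has been probed, and equals $1$ before then. On the other side, applying the strong law of large numbers to the item-generation process yields a (random) constant $M$ such that $\tpi_j(t)\le M$ for every $j$ and all sufficiently large $t$, almost surely. Hence $p_i(t)=\sqrt{\tpi_i(t)}/\sum_j\sqrt{\tpi_j(t)}\ge 1/(n\sqrt{M t})$ almost surely. Since $\sum_t 1/\sqrt{t}=\infty$ and the draws $\A(t)$ are conditionally independent given the history that determines $p(t)$, the conditional (second) Borel--Cantelli lemma forces each node to be probed infinitely often almost surely.

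\textbf{Step (ii).} Let $t_i^{(1)}<t_i^{(2)}<\cdots$ denote the successive probe times of node $i$ (infinite by Step (i)) and $X_i(s)$ the total number of items generated at $i$ by time $s$. Then $c_i(t_i^{(k)})=X_i(t_i^{(k)}-1)$, and by the strong law $X_i(s)/s\to\pi_i$ almost surely, so $\tpi_i(t_i^{(k)})=\max(1,X_i(t_i^{(k)}-1))/t_i^{(k)}\to\pi_i$ a.s.\ as $k\to\infty$. Since $\tpi_i(t)$ is constant on each interval $[t_i^{(k)},t_i^{(k+1)})$ and $t_i^{(k)}\to\infty$, we conclude $\tpi_i(t)\to\pi_i$ a.s.\ for every $i$. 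Continuity of the normalization map $\tpi\mapsto(\sqrt{\tpi_i}/\sum_j\sqrt{\tpi_j})_i$ at $\pi$ then gives $p_i(t)\to\ps_i$ almost surely for every $i$.

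\textbf{Step (iii) and main obstacle.} By \eqref{eq:tau_q}, $\E{Q^\A(t)}=\sum_i\pi_i\E{\tau_i^\A(t)}$ and similarly for $\R$, which in its stationary regime satisfies $\E{\tau_i^\R(t)}\to 1/\ps_i$. It therefore suffices to prove $\E{\tau_i^\A(t)}\to 1/\ps_i$ for every $i$. Heuristically, once $t$ is large the conditional distribution of $\tau_i^\A(t)$ is sandwiched between geometric distributions with parameters $\ps_i\pm\epsilon$, whose means lie within $O(\epsilon)$ of $1/\ps_i$. I expect the delicate part of the proof to be making this rigorous: $\tau_i^\A(t)$ depends on a whole tail of the random probability process $(p_i(s))_{s\le t}$ and is a priori unbounded, so interchanging limit and expectation needs a uniform-integrability argument. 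One natural route is to combine the deterministic lower bound $p_i(t)=\Omega(1/\sqrt{t})$ from Step (i) with a conditional union bound to obtain a polynomial tail bound on $\tau_i^\A(t)$, and then truncate. Once this interchange is justified, summing over $i$ and invoking Lemma~\ref{lem:equivalent_cost} gives $|\E{Q^\A(t)}-\E{Q^\R(t)}|\to 0$, which is precisely the definition of $\A$ converging to $\R$; the equality $\cost{\A,\pi}=\cost{\R,\pi}$ then follows from the fact that the limit $\lim_t\E{Q^\R(t)}$ exists (equaling $\cost{\R,\pi}$ by Cesaro means), so the same limit holds for $\E{Q^\A(t)}$.
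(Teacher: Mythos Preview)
Your three-stage plan matches the paper's target (reduce to $\E{\tau_i^\A(t)}\to 1/\ps_i$), and Steps (i)--(ii) are a legitimate almost-sure alternative to the paper's quantitative Lemmas~\ref{lem:lower_bound_prob}--\ref{lem:bounds_on_prob}. The genuine gap is in Step (iii), and your own diagnosis of it is slightly off. Almost-sure convergence $p_i(t)\to\ps_i$ carries \emph{no rate}: if $B_t$ denotes the good event $\{p_i(s)\ge(1-\epsilon)\ps_i\ \forall s\in[t/2,t)\}$, you only know $\pr{B_t^c}\to 0$. On $B_t^c$ the only handle on $\tau_i^\A(t)$ is the crude lower bound $p_i(s)\ge 1/(n\sqrt{t})$, which yields $\E{(\tau_i^\A(t))^k}=O(t^{k/2})$ for every $k$; by H\"older this gives $\E{\tau_i^\A(t)\,\mathbf{1}_{B_t^c}}=O\big(\sqrt{t}\cdot\pr{B_t^c}^{1-1/k}\big)$, and driving this to zero still forces $\pr{B_t^c}=o(t^{-1/2})$. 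So the ``polynomial tail via $\Omega(1/\sqrt t)$'' fix you propose does not, by itself, close the loop---you must supply a quantitative rate on $\pr{B_t^c}$.

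That rate is exactly what the paper adds. It replaces your SLLN step by a Chernoff bound (Lemma~\ref{lem:chernoff}) giving $|\tpi_i-\pi_i|\le t^{-1/3}\pi_i$ with failure probability $\delta(t)=4n\exp(-\pi_*t^{1/3}/6)$, then union-bounds over $[t/2,t)$ to get $\pr{B_t^c}\le (t/2)\,\delta(t/2)$. Because $\delta$ decays super-polynomially, the crude estimate $\tau_i^\A(t)\le t$ on $B_t^c$ already suffices: $\E{\tau_i^\A(t)\,\mathbf{1}_{B_t^c}}\le t\cdot(t/2)\,\delta(t/2)\to 0$. On $B_t$ the geometric sandwich you describe is then made exact, giving $(1\pm(t/2)^{-1/3})/\ps_i$ bounds on $\E{\tau_i^\A(t)\mid B_t}$. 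In short: keep your Steps (i)--(ii) if you like, but to finish (iii) you need to upgrade the SLLN to a concentration inequality with an explicit (at least polynomial) rate; the $1/\sqrt t$ bound alone cannot absorb the missing rate on the bad event.
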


To prove Theorem~\ref{thm:convergence} we need the following lemmas. 

\begin{lemma}\label{lem:lower_bound_prob}
	For any time $t$ and $i\in[n]$ we have $p_i(t) \geq \frac{1}{n\sqrt{t}}$.
\end{lemma}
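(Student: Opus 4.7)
My plan is to control the numerator and denominator of
\[
p_i(t) = \frac{\sqrt{\tpi_i(t)}}{\sum_{j=1}^n \sqrt{\tpi_j(t)}}
\]
separately. I would prove (i) $\tpi_i(t) \geq 1/t$ for every $i$ and every $t$, so that $\sqrt{\tpi_i(t)} \geq 1/\sqrt{t}$, and (ii) $\tpi_j(t) \leq 1$ for every $j$ and every $t$, so that $\sum_{j=1}^n \sqrt{\tpi_j(t)} \leq n$. Dividing the bound in (i) by the bound in (ii) yields $p_i(t) \geq (1/\sqrt{t})/n = 1/(n\sqrt{t})$ in a single step.

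For (i), I would use a short time-induction driven by the update rule of {\ada}. The algorithm initializes $\tpi_i \leftarrow 1$, which satisfies $1 \geq 1/t$ for all $t\geq 1$. Whenever node $i$ is probed at some time $t' \leq t$, the algorithm sets $\tpi_i \leftarrow \max(1,c_i)/t' \geq 1/t' \geq 1/t$; between updates $\tpi_i$ is held constant, so this lower bound is preserved up to and including step $t$.

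For (ii), I would use a structural property of the generating model: at each step a node produces at most one item (the Bernoulli setting), so the cumulative count $c_j$ at any time $t'$ satisfies $c_j \leq t'$. Combined with the initialization $\tpi_j = 1$ and the update $\tpi_j \leftarrow \max(1,c_j)/t' \leq \max(1,t')/t' = 1$, we obtain $\tpi_j(t) \leq 1$ throughout the execution.

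The main obstacle I anticipate is handling the Poisson regime, in which a node may generate several items per step and $c_j$ may exceed $t'$, breaking the clean bound in (ii). Addressing this would require either a normalization assumption $\pi_j \leq 1$ supplemented by a concentration argument, clipping $\tpi_j$ at a constant inside the algorithm, or a refined control of $\sum_j \sqrt{\tpi_j(t)}$ via the total number of discovered items. Once the two-sided pinch $1/t \leq \tpi_j(t) \leq 1$ is in place, the remainder of the proof is a one-line ratio.
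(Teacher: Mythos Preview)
Your approach is essentially identical to the paper's: both bound the numerator below via $\tpi_i \geq 1/(t{-}1)$ (you use the slightly weaker $1/t$) and the denominator above via $\tpi_j \leq 1$, then take the ratio. Your concern about the Poisson regime is legitimate and applies equally to the paper's own proof, which silently assumes the worst case is $\tpi_j = 1$ for $j\neq i$ without addressing the possibility that $c_j > t'$.
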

\begin{proof}
	It is easy to see that $p_i(t)$ will reach its lowest value at time $t$ only if for $j\neq i$ we have $\tpi_j = 1$ and $\tpi_i=\frac{1}{t-1}$ (which requires $i$ to be probed at time $t-1$). Therefore, 
%	\begin{align*}
$
		p_i(t) \geq \frac{1/\sqrt{t-1}}{1/\sqrt{t-1} + n-1} = \frac{1}{1+(n-1)\sqrt{t-1}} \geq \frac{1}{n\sqrt{t}}. 
$\qed
%	\end{align*}\qed
\end{proof}

Define $\delta(t) = 4n\exp\paran{-\frac{\pi_*t^{1/3}}{6}}$ and let $N_0$ be the smallest integer $t$ such that $\exp\paran{-\frac{\sqrt{t}}{2n}} \leq 2\exp\paran{-\frac{\pi_* t^{1/3}}{6}}$. Note that one can choose $\delta(t) = 4n\exp\paran{-\frac{\pi_*t^{1/2-\epsilon}}{6}}$ for any $\epsilon \in (0,1/2)$, and for convenience we chose $\epsilon = 1/6$.
\begin{lemma}\label{lem:all_probed}
	For any time $t \geq N_0$, with probability $\ge 1-\delta(t)/2$, all the nodes are probed during the time interval $[t/2,t)$ .
\end{lemma}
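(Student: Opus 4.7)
The plan is to show this via a per-node analysis followed by a union bound, exploiting the deterministic lower bound on the probing probabilities from Lemma~\ref{lem:lower_bound_prob}.

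First I would fix a node $i \in [n]$ and bound the probability that $i$ is never probed during $[t/2, t)$. The subtlety is that the probabilities $p_i(s)$ are themselves random, since they depend on the past observations made by $\ada$. I would handle this by iterated conditioning on the filtration generated by the algorithm's history: given any history up to step $s$, the value $p_i(s)$ is determined, and by Lemma~\ref{lem:lower_bound_prob} it satisfies $p_i(s) \ge 1/(n\sqrt{s}) \ge 1/(n\sqrt{t})$ for every $s \in [t/2, t)$. Consequently, the conditional probability that $i$ is not probed at step $s$ is at most $1 - 1/(n\sqrt{t})$ almost surely, and chaining these conditional bounds yields
\[
\Pr[i \text{ not probed in } [t/2, t)] \;\le\; \paran{1 - \tfrac{1}{n\sqrt{t}}}^{t/2} \;\le\; \exp\paran{-\tfrac{\sqrt{t}}{2n}},
\]
using $1-x \le e^{-x}$ and the fact that the interval has length $t/2$.

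Next I would take a union bound over the $n$ nodes, obtaining that the probability some node is missed in $[t/2, t)$ is at most $n \exp\paran{-\sqrt{t}/(2n)}$. Finally, I would invoke the definition of $N_0$: for $t \ge N_0$ we have $\exp(-\sqrt{t}/(2n)) \le 2\exp(-\pi_\ast t^{1/3}/6)$, so the union bound is at most $2n \exp(-\pi_\ast t^{1/3}/6) = \delta(t)/2$, which is exactly the claimed failure probability.

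The one genuinely non-trivial point, and hence the main thing to argue carefully, is handling the adaptivity: because $p_i(s)$ depends on past probes and past item counts, the events ``$i$ not probed at step $s$'' are not independent across $s$. Everything else is routine once the uniform, history-free lower bound from Lemma~\ref{lem:lower_bound_prob} is in hand, since it lets the conditional argument collapse to the same product one would write in the independent case. The rest is just matching constants to the somewhat unusual definitions of $\delta(t)$ and $N_0$ chosen by the authors.
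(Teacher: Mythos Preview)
Your proposal is correct and follows essentially the same argument as the paper: apply the deterministic lower bound $p_i(s)\ge 1/(n\sqrt{t})$ from Lemma~\ref{lem:lower_bound_prob}, bound the non-probing probability by $(1-1/(n\sqrt{t}))^{t/2}\le e^{-\sqrt{t}/(2n)}$, invoke the definition of $N_0$ to compare with $\delta(t)/(2n)$, and union-bound over the $n$ nodes. If anything, your treatment is more careful than the paper's, which writes the product $\prod_{t'}(1-p_i(t'))$ directly without mentioning the filtration/adaptivity issue you flag.
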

\begin{proof} %Obviously, there is a constant $N$ such that if $t\geq N$ we have $\exp\paran{-\frac{\sqrt{t}}{2n}} \leq 2\exp\paran{-\frac{\pi_* t^{1/3}}{6}}$. 
	By Lemma~\ref{lem:lower_bound_prob}, the probability of not probing $i$ during the time interval $[t/2, t)$ is at most
	$$\prod_{t'=t/2}^{t-1} (1-p_i(t')) \leq \paran{1-\frac{1}{n\sqrt{t}}}^{t/2} \leq e^{-\frac{t}{2n\sqrt{t}}} \leq 2\exp\paran{-\frac{\pi_* t^{1/3}}{6}} = \frac{\delta(t)}{2n}.$$
	A union bound over all the nodes completes the proof. \qed
\end{proof}
%For the rest of this section, we let $N_0$ be the constant that satisfies the condition in Lemma~\ref{lem:all_probed}.

\begin{lemma}\label{lem:chernoff}
	Suppose node $i$ is probed at a time $t' > t/2$. Then, 
	$$\pr{|\tpi_i(t')-\pi_i| > t^{-\frac{1}{3}}\pi_i} < \frac{\delta(t)}{2n}.$$
\end{lemma}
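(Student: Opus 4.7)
The plan is to reduce the statement to a standard multiplicative Chernoff bound on the cumulative counter $c_i$ and then translate that into a deviation bound on $\tpi_i(t')=c_i/t'$.

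First, I would observe that when node $i$ is probed at step $t'$, every item ever generated at node $i$ during steps $1,2,\ldots,t'-1$ has been caught and added to the counter $c_i$ by the end of step $t'$. Since the per-step generation at node $i$ is an independent Bernoulli (or Poisson) random variable with mean $\pi_i$, we can write $c_i=\sum_{s=1}^{t'-1}X_s$ with i.i.d.\ summands of mean $\pi_i$, giving $\E{c_i}=\pi_i(t'-1)$. Because $t>N_0$ is large and $t'>t/2$, we have $\pi_i(t'-1)\gg 1$, so with overwhelming probability $c_i\ge 1$ and the $\max(1,\cdot)$ clamp in Algorithm~\ref{alg:ada} is inactive; hence $\tpi_i(t')=c_i/t'$.

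Next, I would apply the standard multiplicative Chernoff bound with deviation parameter $\epsilon=t^{-1/3}\in(0,1)$:
$$\pr{\bigl|c_i-\pi_i(t'-1)\bigr|>\epsilon\,\pi_i(t'-1)} \le 2\exp\!\paran{-\epsilon^2\,\pi_i(t'-1)/3}.$$
Using $t'-1\ge t'/2\ge t/4$ for $t$ large and $\pi_i\ge\pi_*$, the exponent is at most $-\pi_* t^{1/3}/6$ (after absorbing constants into the choice of $N_0$), so the failure probability is at most $2\exp\!\paran{-\pi_* t^{1/3}/6}=\delta(t)/(2n)$. A short calculation then converts the multiplicative deviation on $c_i$ into the multiplicative deviation $|\tpi_i(t')-\pi_i|\le t^{-1/3}\pi_i$, by dividing by $t'$ and accounting for the $O(1/t')$ slack between $\pi_i(t'-1)/t'$ and $\pi_i$; again this slack is dominated by $t^{-1/3}\pi_i$ once $t\ge N_0$.

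The main obstacle, or at least the only place requiring care, is ensuring that all three ``cleanup'' issues — the $\max(1,\cdot)$ clamp, the $t'-1$ versus $t'$ discrepancy, and matching the precise constant $1/6$ in the definition of $\delta(t)$ — are all simultaneously absorbed by the threshold $N_0$. This is exactly why $N_0$ is defined as the smallest $t$ for which $\exp(-\sqrt{t}/(2n))\le 2\exp(-\pi_* t^{1/3}/6)$: that inequality gives enough slack in the exponent to accommodate all lower-order terms. Apart from this bookkeeping, the argument is a one-line Chernoff bound.
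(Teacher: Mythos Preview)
Your approach is exactly the paper's: apply a multiplicative Chernoff bound to the cumulative item count and then use $t'>t/2$ and $\pi_i\ge\pi_*$ to reach $2\exp(-\pi_* t^{1/3}/6)=\delta(t)/(2n)$. You are in fact more careful than the paper, which simply treats $\tpi_i(t')$ as an average of $t'$ (not $t'-1$) i.i.d.\ terms and ignores the $\max(1,\cdot)$ clamp altogether.

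One correction, however: your chain $t'-1\ge t'/2\ge t/4$ loses a factor of $2$ in the exponent, producing only $2\exp(-\pi_* t^{1/3}/12)$, and this \emph{cannot} be ``absorbed into the choice of $N_0$'' as you claim. The threshold $N_0$ is fixed by the paper with the constant $1/6$ already baked in; it is not a free parameter, and $\exp(-\pi_* t^{1/3}/12)>\exp(-\pi_* t^{1/3}/6)$ for every $t>0$. To recover the stated bound you should either use the tighter estimate $t'-1>t/2-1$ (which for large $t$ gives essentially the same exponent the paper obtains via its own off-by-one sloppiness), or appeal to the remark just before the lemma that the constant in $\delta(t)$ was chosen for convenience and could be weakened.
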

\begin{proof}
We estimate $\pi$ from $t'>t/2$ steps, each with $\pi_i$ expected number of new items. Applying a Chernoff bound~\cite{mu05} for the sum of $t'$ independent random variables with either Bernulli or Poisson distribution we have 
%	Note that at each time, generating an item at node $i$ happens with probability $\pi_i$ and our estimate $\tilde \pi_i$ is obtained by at least $t/2$ Bernoulli trials. Hence by Chernoff we have
\begin{align*}
	\pr{|\tpi_i(t')-\pi_i| > t^{-\frac{1}{3}}\pi_i} 
	&< 
	2\exp\paran{-\frac{t^{-\frac{2}{3}} \pi_i t'}{3}} 
	\leq
	2\exp\paran{-\frac{t^{-\frac{2}{3}} \pi_* t}{6}} 
	= \frac{\delta(t)}{2n}. \ \qed
\end{align*} 
\end{proof}

Note that by union bound, Lemma~\ref{lem:chernoff} holds, with probability at least $1-\delta(t)/2$, for all the nodes that are probed after $t/2$.

\begin{lemma}\label{lem:bounds_on_prob}
	Suppose $t \geq N_0$. With probability at least $1-\delta(t)$ we have for all $i\in [n]$,
	$$\paran{1-\frac{1}{t^{1/3}+1}}\ps_i\leq \sqrt{\frac{1-t^{-1/3}}{1+t^{-1/3}}} \ps_i \leq p_i(t) \leq \sqrt{\frac{1+t^{-1/3}}{1-t^{-1/3}}} \ps_i \leq \paran{1+\frac{1}{t^{1/3}-1}}\ps_i$$ 
	
\end{lemma}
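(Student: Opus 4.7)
The plan is to deduce the statement by combining Lemmas~\ref{lem:all_probed} and~\ref{lem:chernoff} through a union bound, and then by plugging the resulting estimate of $\tpi$ into the formula $p_i(t) = \frac{\sqrt{\tpi_i}}{\sum_j \sqrt{\tpi_j}}$. First, by Lemma~\ref{lem:all_probed}, with probability at least $1-\delta(t)/2$, every node $i\in [n]$ is probed at some time $t_i' \in [t/2, t)$. On that event, each of the $n$ estimates $\tpi_i$ is computed from at least $t/2$ steps of observation, so applying Lemma~\ref{lem:chernoff} to each of the $n$ nodes and taking another union bound yields that, with probability at least $1-\delta(t)/2$, simultaneously for all $i$,
$$(1 - t^{-1/3})\,\pi_i \leq \tpi_i \leq (1 + t^{-1/3})\,\pi_i.$$
A final union bound combines both events into probability at least $1-\delta(t)$.

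Next, I would take square roots and substitute into the definition of $p_i(t)$. Bounding the numerator $\sqrt{\tpi_i}$ above by $\sqrt{1+t^{-1/3}}\sqrt{\pi_i}$ and the denominator $\sum_j \sqrt{\tpi_j}$ below by $\sqrt{1-t^{-1/3}}\sum_j \sqrt{\pi_j}$ (and vice versa for the lower bound) immediately produces the two middle inequalities
$$\sqrt{\tfrac{1-t^{-1/3}}{1+t^{-1/3}}}\,\ps_i \ \leq\ p_i(t)\ \leq\ \sqrt{\tfrac{1+t^{-1/3}}{1-t^{-1/3}}}\,\ps_i,$$
where $\ps_i = \sqrt{\pi_i}/\sum_j\sqrt{\pi_j}$ is the optimal memoryless probability from Theorem~\ref{thm:randomized_schedule}.

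The outer inequalities are then purely algebraic simplifications in the single variable $x = t^{-1/3}$. For the upper side, I would use the elementary bound $\sqrt{(1+x)/(1-x)} \leq 1/(1-x) = 1 + x/(1-x)$, valid because $(1+x)(1-x) \leq 1$; substituting $x = t^{-1/3}$ gives $1 + 1/(t^{1/3}-1)$. The lower side is analogous and reduces to a comparable one-variable inequality in $x$.

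I do not expect any real obstacle: all the probabilistic content is already packaged in Lemmas~\ref{lem:all_probed} and~\ref{lem:chernoff}, and the remaining work is bookkeeping with union bounds plus the two elementary inequalities on $\sqrt{(1\pm x)/(1\mp x)}$. The only mild care needed is to ensure that we apply Lemma~\ref{lem:chernoff} at the specific probing time $t_i'$ of each node (which must lie in $[t/2,t)$, as guaranteed by the first event), so that the $t' > t/2$ hypothesis of that lemma is satisfied.
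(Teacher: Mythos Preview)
Your proposal is correct and follows essentially the same route as the paper: a union bound combining Lemmas~\ref{lem:all_probed} and~\ref{lem:chernoff} to obtain $(1-t^{-1/3})\pi_i\le\tpi_i\le(1+t^{-1/3})\pi_i$ for all $i$ with probability at least $1-\delta(t)$, followed by substitution into $p_i(t)=\sqrt{\tpi_i}/\sum_j\sqrt{\tpi_j}$ to get the two middle inequalities. The only cosmetic difference is in justifying the outer inequalities: the paper invokes the Taylor expansion of $\sqrt{1+x}$, whereas you use the direct algebraic bound $\sqrt{(1+x)/(1-x)}\le 1/(1-x)$ (equivalent to $(1+x)(1-x)\le 1$) for the upper side; both are one-line estimates in the variable $x=t^{-1/3}$.
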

\begin{proof}
	Applying Lemma~\ref{lem:all_probed}, Lemma~\ref{lem:chernoff} and a union bound, with probability $1-\delta(t)$  all the nodes are probed during the time $[t/2, t)$ and $|\tpi_i(t) - \pi_i| \leq t^{-1/3}\pi_i$ for all $i\in [n]$.
%	
%	By Lemma~\ref{lem:all_probed}, all the nodes are probed during the time interval $[t/2, t)$ with at least $1-\delta(t)/2$ probability, and using Lemma~\ref{lem:chernoff} and applying the union bound for all the nodes,  with probability $1-\delta(t)/2$ for all $i\in [n]$ we have $|\tpi_i(t) - \pi_i| \leq t^{-1/3}\pi_i$. Therefore, by union bound, with at least $1-\delta(t)$ probability, all the nodes are probed during the time $[t/2, t)$ and $|\tpi_i(t) - \pi_i| \leq t^{-1/3}\pi_i$ for all $i\in [n]$. Finally, 
	Since $p_i(t) = \frac{\sqrt{\tpi_i(t)}}{\sum_j \sqrt{\tpi_j(t)}}$, we obtain 
	$$p_i(t) \geq \frac{\sqrt{(1-t^{-1/3})\pi_i}}{\sum_j \sqrt{(1+t^{-1/3})\pi_j}} = \sqrt{\frac{1-t^{-1/3}}{1+t^{-1/3}}} \frac{\sqrt{\pi_i}}{\sum_j \sqrt{\pi_j}} =
		\sqrt{\frac{1-t^{-1/3}}{1+t^{-1/3}}} \ps_i \geq (1-\frac{1}{t^{1/3}+1})\ps_i
	$$
	where the last inequality uses the Taylor series of $\sqrt{1+x}$.
	 The upper bound is obtained by a similar argument. \qed
\end{proof}

\begin{corollary}
The variation distance between the distribution $p(t)=(p_1(t),\dots,p_n (t))$ used by algorithm {\ada}  at time $t\geq N_0$, and the distribution $\ps=(\ps_i ,\dots,\ps_n)$ used by the optimal memoryless algorithm satisfy
$$\parallel p(t) - \ps \parallel = \frac{1}{2}\sum_{i=1}^n |p_i(t) - \ps_i | \leq \frac{n}{t^{1/3}-1} + \delta(t) \stackrel{t \rightarrow \infty} \longrightarrow 0.$$
\end{corollary}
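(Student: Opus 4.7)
The claim follows almost directly from Lemma~\ref{lem:bounds_on_prob}, so the plan is to unpack that lemma and then handle the complementary ``bad'' event. First I would invoke Lemma~\ref{lem:bounds_on_prob} to obtain, with probability at least $1-\delta(t)$, the simultaneous two-sided bound
\[
\left(1-\frac{1}{t^{1/3}+1}\right)\ps_i \;\leq\; p_i(t)\;\leq\; \left(1+\frac{1}{t^{1/3}-1}\right)\ps_i \qquad \text{for every } i\in[n].
\]
Since $\frac{1}{t^{1/3}+1}\le \frac{1}{t^{1/3}-1}$, this gives the one-line consequence $|p_i(t)-\ps_i|\le \frac{\ps_i}{t^{1/3}-1}$ on this good event.

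Next, I would sum over $i$ and use $\sum_i \ps_i = 1$ to get $\sum_{i=1}^n |p_i(t)-\ps_i| \le \frac{1}{t^{1/3}-1}$, so on the good event
\[
\|p(t)-\ps\| \;=\; \tfrac{1}{2}\sum_{i=1}^n |p_i(t)-\ps_i| \;\le\; \frac{1}{2(t^{1/3}-1)} \;\le\; \frac{n}{t^{1/3}-1}.
\]
On the complementary ``bad'' event of probability at most $\delta(t)$, I would use the trivial fact that $p(t)$ and $\ps$ are both probability vectors, so $\|p(t)-\ps\|\le 1$. Combining the two cases (either by taking expectations or by reading the bound as $\mathbb{E}[\|p(t)-\ps\|]$), the contributions add to at most $\frac{n}{t^{1/3}-1}+\delta(t)$, which is the stated inequality.

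Finally, I would show convergence to zero by noting that $\frac{n}{t^{1/3}-1}\to 0$ trivially and $\delta(t)=4n\exp\!\bigl(-\pi_* t^{1/3}/6\bigr)\to 0$ exponentially, so the sum tends to $0$ as $t\to\infty$. The only real ``obstacle'' is an expository one: deciding whether to present the inequality as a high-probability statement (with the good-event bound $\frac{1}{2(t^{1/3}-1)}$ failing with probability $\delta(t)$) or as a bound on the expected variation distance; the slack factor of $n$ in the stated inequality comfortably absorbs either interpretation, so no additional concentration or union-bound argument beyond what Lemma~\ref{lem:bounds_on_prob} already supplies is needed.
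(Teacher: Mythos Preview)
Your proposal is correct and is exactly the intended argument: the paper states the corollary immediately after Lemma~\ref{lem:bounds_on_prob} without proof, so the direct application of that lemma together with the trivial total-variation bound on the complementary event is all that is required. In fact your good-event bound $\tfrac{1}{2(t^{1/3}-1)}$ is sharper than the paper's $\tfrac{n}{t^{1/3}-1}$; the paper's extra factor of $n$ simply comes from the cruder estimate $|p_i(t)-\ps_i|\le \ps_i/(t^{1/3}-1)\le 1/(t^{1/3}-1)$ before summing, and your observation about the expository ambiguity (high-probability versus expected variation distance) is apt, since the paper leaves this implicit.
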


Finally, we present our proof for Theorem~\ref{thm:convergence}.
\paragraph{Proof of Theorem~\ref{thm:convergence}.}
Recall that we defined $\tau_i^\S(t)$ as the number of steps from the last time that node $i$ was probed until time $t$ in an execution of an schedule $\S$, and $\E{Q_i^\S} = \pi_i\E{\tau_i^\S(t)}$.

Let $F(t)$ indicate the event that the inequalities in Lemma~\ref{lem:bounds_on_prob} are held for $\forall t'\in [t/2, t)$. Therefore, $\pr{F(t)} < 1 -  \frac{\delta(t/2) \cdot t}{2}$ by applying union bound over all $t' \in [t/2,t)$, and using the fact that $\delta(t') \leq \delta(t/2)$.
Therefore,
%
%
%event that all the nodes are probed during the time interval $[t/2, t)$, and for every $i\in[n]$. Therefore, 
\begin{align*}
	|\E{Q^\A(t)} - \E{Q^\R(t)}| 
	&= \left|\sum_{i=1}^n \pi_i \E{\tau_i^{\A}(t)}-\sum_{i=1}^n \pi_i \E{\tau_i^{\R}(t)} \right| \\
	&\leq 
	\sum_{i=1}^n \pi_i \left| \E{\tau_i^\A(t)} - \E{\tau_i^\R(t)} \right| \\
	&\leq 
	\sum_{i=1}^n \pi_i \left| \E{\tau_i^\A(t)} - \E{\tau_i^\A(t)\mid F(t)} \right| \\ &+  
	\sum_{i=1}^n \pi_i \left| \E{\tau_i^\A(t)\mid F(t)} - \E{\tau_i^\R(t)} \right|,
\end{align*}
where we used the triangle inequality for both inequalities. So, it suffices to show that for every $i$, 
$$\lim\limits_{t\rightarrow\infty}\left|\E{\tau_i^\A(t)} - \E{\tau_i^\A(t)\mid F(t)}\right|
=
\lim\limits_{t\rightarrow\infty}\left|\E{\tau_i^\A(t)\mid F(t)} - \E{\tau_i^\R(t)}\right|=0.$$
Obviously, $\tau^\A_i(t) \leq t$. Now by letting $t \geq 2N_0$ we have,
\begin{align}\label{eq:upper_tau}
\E{\tau_i^\A(t)} &= \pr{F(t)}\E{\tau_i^\A(t) \mid F(t)} + \pr{\neg F(t)}\E{\tau_i^\A(t) \mid \neg F(t)} \nonumber \\
&\leq  \E{\tau_i^\A(t) \mid F(t)} + \frac{\delta(t/2)t}{2} t = 
	\E{\tau_i^\A(t) \mid F(t)} + \frac{\delta(t/2)t^2}{2}.
\end{align}
We also get
\begin{align}\label{eq:lower_tau}
\E{\tau_i^\A(t)} &\geq \paran{1-\frac{\delta(t/2)t}{2}} \E{\tau_i^\A(t) \mid F(t)}  \\
 &= \E{\tau_i^\A(t) \mid F(t)} - \frac{\delta(t/2)t}{2}\E{\tau_i^\A(t) \mid F(t)} \geq \E{\tau_i^\A(t) \mid X} - \frac{\delta(t)t^2}{2}. \nonumber
\end{align}
Note that $\lim\limits_{t\rightarrow\infty} \frac{\delta(t/2) t^2}{2} = \lim\limits_{t\rightarrow\infty} 4ne^{-\frac{\pi_*t^{1/3}}{6\sqrt[3]{2}}} t^2 = 0$, and thus by  \eqref{eq:upper_tau} and \eqref{eq:lower_tau} we have
\begin{align}\label{eq:first_term}
\lim_{t\rightarrow\infty}  \E{\tau_i^\A(t)} - \E{\tau_i^\A(t)\mid F(t)} = 0 \Rightarrow
	\lim_{t\rightarrow\infty} \left|\E{\tau_i^\A(t)} - \E{\tau_i^\A(i) \mid F(t)} \right| = 0.
\end{align}

Now, we show that $\lim\limits_{t\rightarrow\infty}\left|\E{\tau_i^\A(t)\mid F(t)} - \E{\tau_i^\R(t)}\right|=0$.
So here, we assume $F(t)$ holds. So for every $i\in[n]$, node $i$ is  probed in $[t/2,t)$, and for all $t'\in[t/2, t)$ we have
\begin{itemize}
	\item[(i)] $p_i(t') \geq \paran{1-\frac{1}{t'^{1/3}+1}}\ps_i \geq \paran{1-\frac{1}{(t/2)^{1/3} + 1}}\ps_i$. So, 
	$$\E{\tau_i^\A(t)\mid F(t)} \leq \paran{1-\frac{1}{(t/2)^{1/3} + 1}}^{-1}\frac{1}{\ps_i} = \paran{1+(t/2)^{-1/3}}\frac{1}{\ps_i}.$$
	
	\item[(ii)] $p_i(t') \leq \paran{1+\frac{1}{t'^{1/3}-1}}\ps_i \leq \paran{1+\frac{1}{(t/2)^{1/3} - 1}}\ps_i$. Hence,
	$$\E{\tau_i^\A(t)\mid F(t)} \geq \paran{1+\frac{1}{(t/2)^{1/3} - 1}}^{-1}\frac{1}{\ps_i} = \paran{1-(t/2)^{-1/3}}\frac{1}{\ps_i}.$$
\end{itemize}
Obviously $\E{\tau_i^\R(t)} = \frac{1}{\ps_i}$, since probing  node $i$ by $\R$ can be viewed as a geometric distribution with parameter $\ps_i$, and since $\delta(t)\rightarrow 0$ as $t\rightarrow \infty$ we have

\begin{align*}
\E{\tau_i^\R(t)} &= \frac{1}{\ps_i} = 
\lim_{t\rightarrow \infty} \paran{1-(t/2)^{-1/3}}\frac{1}{\ps_i} \leq \lim_{t\rightarrow \infty} \E{\tau_i^\A(t) \mid F(t)} \\
&\leq 
\lim_{t\rightarrow\infty} \paran{1+(t/2)^{-1/3}}\frac{1}{\ps_i} = \frac{1}{\ps_i} = \E{\tau_i^\R(t)}.
\end{align*}
Therefore, 
\begin{align}\label{eq:second_term}
\lim_{t\rightarrow\infty} |\E{\tau_i^\A(t)\mid F(t)}- \E{\tau_i^\R(t)}| = 0.
\end{align}
Thus, by \eqref{eq:first_term} and \eqref{eq:second_term} we have $\lim\limits_{t\rightarrow\infty} |\E{Q^\A(t)} - \E{Q^\R(t)}| = 0$, and $\A$ converges to $\S$, and since $\lim\limits_{t\rightarrow \infty} \E{Q^\S(t)}  = \paran{\sum_{i=1}^n \sqrt{\pi_i}}^2$, it implies that $\lim\limits_{t\rightarrow \infty} \E{Q^\A(t)} = \paran{\sum_{i=1}^n \sqrt{\pi_i}}^2 = \cost{\A, \pi}$ (by Cesaro Mean~\cite{hardy1991divergent}). \qed

Note that one can obtain an adaptive schedule $\A^c$ by choosing $c$ probes in each step, at each round of {\ada}, and using similar argument as in Section~\ref{sec:optmem} (and similar to Corollary~\ref{corr-c}), it is easy to see that $\A^c$ converges to $\R^c$.

Finally, if $\pi$ changes, the {\ada} algorithm converges to the new optimal memoryless algorithm, as the change in the rate of generating new items is observed by {\ada}.

\bibliographystyle{llncs2e/splncs03}
{\footnotesize \bibliography{iso}}

\end{document}